\documentclass[aps, a4paper, reprint]{revtex4-2}

\usepackage{amsmath, amssymb, amsfonts, physics, amsthm}
\usepackage{hyperref}
\hypersetup{
	colorlinks=true,
	linkcolor=blue
}
\usepackage{graphicx,xcolor}
\newtheorem{theorem}{Theorem}
\newtheorem{definition}{Definition}

\newtheorem{remark}{Remark}
\newtheorem{lemma}{Lemma}

\newtheorem{proposition}{Proposition}
\usepackage{blindtext}

\begin{document}
\title{On genuine multipartite entanglement signals}
\date{\today}
\author{Abhijit Gadde}
\email{abhijit@theory.tifr.res.in}
\affiliation{Department of Theoretical Physics, Tata Institute of Fundamental Research, Homi Bhabha Road, Mumbai, 400005.}

\begin{abstract}
We give a general construction of genuinely multipartite entanglement signals from families of lower-partite symmetric local-unitary invariants satisfying a natural compatibility condition. M\"obius inversion on the partition lattice plays a key role in this construction. We show that many examples of multipartite entanglement signals considered in the literature fit naturally into this framework. We also explain how the genuinely multipartite signal can be extracted from a general, not necessarily symmetric, multi-invariant.
\end{abstract}

\maketitle

\section{States and separability}
We will first develop some terminology regarding multi-partite states and their separability. A quantum state $|\psi\rangle$ is called $q$-partite if $|\psi\rangle \in {\cal H}$ where ${\cal H}$ is a tensor product of $q$ factors, ${\cal H}={\cal H}_{A_1}\otimes \ldots \otimes {\cal H}_{A_q}$. The subscript of each  factor is called its party. 
For general discussion we will keep the notation $A_1, \ldots, A_q$ for parties and for specific examples, we will switch to labeling parties as $A, B, C, \ldots$ etc. Separable states are classified by partitions of the set  $X=\{A_1,\ldots, A_q\}$ so it is useful to review the standard terminology regarding the same.

A partition of a set $X$ is a collection of non-empty subsets of $X$ (called blocks) which are disjoint and cover the whole set. We denote the set of  partitions of $X$ as $\Pi_X$. 
We denote partitions i.e. elements of $\Pi_X$ by the greek letters $\pi, \rho$ etc. and denote its blocks as $B_1,\ldots, B_m$. The number of blocks in $\pi$ is denoted as $|\pi|$. The integer partition  of $|X|$, given by the block sizes of $\pi$, arranged in non-decreasing order  is called the type of $\pi$ and is denoted as $\lambda(\pi)$. 
The \emph{finest}  partition is the unique partition of $X$ into singletons i.e. blocks of size $1$, characterized by $|\pi|=|X|$; we denote it by ${\bf 0}$.
The \emph{coarsest}  partition is the unique partition with a single block, characterized by $|\pi|=1$; we denote it by ${\bf 1}$.
An example of a partition of the set  $\{A,B,C,D,E\}$ is $\{\{A,B\},\{C,D,E\}\}$. It is of the type $(2,3)$ and its blocks are $B_1=\{A,B\}$ and $B_2=\{C,D,E\}$. 

A pure $q$-partite state $|\psi\rangle$ is called $\pi$-separable if the state $|\psi\rangle$ factorizes across the blocks of $\pi$ i.e.
\begin{align}
    |\psi\rangle=\bigotimes_{B\in \pi} |\psi_B\rangle, \qquad {\rm with}\quad |\psi_B\rangle\in \bigotimes_{A_a\in B}{\cal H}_{A_a}.
\end{align}
We call the $\pi$-separable state $|\psi\rangle$ 
\begin{itemize}
    \item \emph{completely separable} if $\pi={\bf 0}$. 
    \item \emph{separable} if $|\pi|\geq 2$.
\end{itemize}
We call the state \emph{genuinely multipartite entangled (GME)}  if the state is not separable.
We will often use the notation $|\psi\rangle_\pi$ to denote the separability type of the state.

Let $|\psi\rangle \in \bigotimes_{a=1}^q \mathcal H_{A_a}$ be a $q$-partite pure state. 
Assume that each local Hilbert space further factorizes as
\[
\mathcal H_{A_a} \;=\; \bigotimes_{i=1}^n \mathcal H_{A_a^{(i)}} \qquad (a=1,\dots,q).
\]
We say that $|\psi\rangle$ \emph{admits a layer decomposition} (or \emph{factorizes into layers}) if there exist $q$-partite states
\[
|\psi^{(i)}\rangle \in \bigotimes_{a=1}^q \mathcal H_{A_a^{(i)}} \quad {s.t.}\quad
|\psi\rangle \;=\; \bigotimes_{i=1}^n |\psi^{(i)}\rangle .
\]
The states $\{|\psi^{(i)}\rangle\}_{i=1}^n$ are called the \emph{layers} of $|\psi\rangle$.

We call $|\psi\rangle$ \emph{layerwise-separable} if it admits a layer decomposition for which every layer $|\psi^{(i)}\rangle$ is separable.
In particular, any separable state is layerwise-separable (take $n=1$), whereas the converse need not hold. 
We regard the non-separable layers as the GME part of the state.

In order to probe multipartite entanglement properties of a pure state
$|\psi\rangle \in \bigotimes_{a=1}^q \mathcal H_{A_a}$, we consider functions of $|\psi\rangle$
that are invariant under \emph{local unitary} (LU) transformations,
\begin{align}
    |\psi\rangle \;\longmapsto\; (U_1\otimes \cdots \otimes U_q)\,|\psi\rangle,
\end{align}
where each $U_a$ is a unitary operator acting on $\mathcal H_{A_a}$.

\begin{definition}
    An LU-invariant function $f: \otimes_{a=1}^q  {\cal H}_{A_a}\to {\mathbb C}$ is called additive (multiplicative) if it is additive (multiplicative) under layer decomposition. 
\end{definition}
\begin{definition}[Signal]
    A signal is an LU-invariant function $f:\otimes_{a=1}^q  {\cal H}_{A_a}\to {\mathbb C}$ that vanishes on layerwise-separable states.
\end{definition}
\begin{definition}[Pre-signal]
    A pre-signal is an LU-invariant function $f:\otimes_{a=1}^q  {\cal H}_{A_a}\to {\mathbb C}$ that vanishes on separable states.
\end{definition}
\begin{remark}\label{signal_rel}
    A pre-signal that is additive is a signal.
\end{remark}
\noindent In fact, one can make an even stronger statement. 
\begin{lemma}\label{key_lemma}
    If an LU-invariant function is additive and vanishes for $\pi$-separable state $|\psi\rangle$ of type  $(1,q-1)$ then it is a  signal.
\end{lemma}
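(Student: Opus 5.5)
The plan is to reduce a general layerwise-separable state to a product of layers, each of type $(1,q-1)$ or finer. By additivity, each such layer then contributes zero.

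First, let $f$ be additive and vanish on $(1,q-1)$-separable states. I will show that $f$ vanishes on every separable state. Take $|\psi\rangle=\bigotimes_{B\in\pi}|\psi_B\rangle$ with $|\pi|\ge 2$. The key observation is that a $\pi$-separable state with $|\pi|\geq 2$ admits a layer decomposition in which every layer is $(1,q-1)$-separable.

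To see this, fix a block $B\in\pi$ and regard $|\psi_B\rangle$ as a $q$-partite state on a layer. On each party $A_a\in B$ the layer factor is ${\cal H}_{A_a}$ itself; on each party $A_a\notin B$ the layer factor is a trivial one-dimensional Hilbert space ${\mathbb C}$. Writing ${\cal H}_{A_a}={\cal H}_{A_a}\otimes{\mathbb C}\otimes\cdots\otimes{\mathbb C}$ with one slot per block, the state becomes $\bigotimes_{B\in\pi}|\psi_B\rangle$. This is exactly a layer decomposition whose $B$-th layer is supported only on the parties in $B$.

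Since $|\pi|\ge2$, each $B$ is a proper subset of $X$. Choose a party $A_c\notin B$. The $B$-th layer is then trivially a product of the one-dimensional state on $A_c$ with the state on the remaining $q-1$ parties, so it is $(1,q-1)$-separable. By the hypothesis $f$ vanishes on each layer, and by additivity $f(|\psi\rangle)=\sum_B f(\text{layer}_B)=0$. Hence $f$ is a pre-signal, and by Remark~\ref{signal_rel} it is a signal.

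For directness, the same argument can be run on a general layerwise-separable state. Each separable layer is first refined into sublayers of type $(1,q-1)$ as above, and additivity is then applied twice.

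The main obstacle is legitimacy: we need the layer decomposition to allow one-dimensional local factors, and we need $f$ to be defined and additive on such mixed-dimension factorizations. If trivial factors are disallowed, my fallback is to pad with a fixed product state. Writing $|\psi_B\rangle\otimes|0\rangle$ on the complementary parties gives a $(1,q-1)$-separable layer. Additivity applied to the completely separable state $|0\cdots0\rangle=|0\cdots0\rangle\otimes|0\cdots0\rangle$ gives $f(\text{product})=0$, so the padding costs nothing. I would check that this bookkeeping is consistent with the definition of additivity, which is the only delicate point.
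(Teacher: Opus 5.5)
Your proposal is correct and follows the paper's argument: each separable layer is split block by block into layers of type $(1,\ldots,1,|B|)$, each of which is $(1,q-1)$-separable because $B\neq X$, and additivity then forces $f$ to vanish. In your padding fallback, $f(|0\cdots0\rangle)=0$ follows directly from the hypothesis, since a product state is itself $(1,q-1)$-separable, so the doubling identity is not needed.
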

\noindent 
This is because if the state layerwise-separable then each layer can be written as a tensor product of layers, each of which is of the type $(1,q-1)$. For example, if a $5$-partite state is separable of the type $(2,3)$ then it can be considered a product of two layers of the type $(1,1,1,2)$ and $(1,1,3)$. Each of these layers are of the type $(1,4)$ as well.
\begin{remark}
    If $f(|\psi\rangle)$ is a signal (pre-signal) then $g(f(|\psi\rangle))$ is also a signal (pre-signal) for any $g:{\mathbb C}\to {\mathbb C}$.
\end{remark}
\noindent 

\subsection*{Signals}
A signal is a  useful indicator for the presence of GME layer of the state. 
A generic local-unitary (LU) invariant function need not respect a layer decomposition: it may combine information
from different local tensor factors and thereby obscure the layered structure. Consequently, a signal must, in effect,
identify a local basis (or, more generally, a local tensor-product structure) in which the state admits a layer
decomposition. A simple way to enforce compatibility with layers is to use Remark~\ref{signal_rel} and work with
pre-signals that are additive. In Section~\ref{general_construct}, we follow this
strategy, in particular Lemma~\ref{key_lemma}, and construct additive signals as linear combinations of logarithms of multi-invariants. Since
multi-invariants are multiplicative, their logarithms are additive. These signals play a crucial
role in identifying the low-energy topological quantum field theory from the ground-state wavefunction \cite{DelZotto:2026fpw}. Additive signals based on multi-entropy and Renyi multi-entropy -- called genuine multi-entropy \cite{Iizuka:2025ioc, Iizuka:2025caq}  -- have been used to diagnose genuine multipartite entanglement in holographic states. 

Signals, however, need not be additive. For instance,
\begin{align}
    f(|\psi\rangle)
    \;:=\;
    1-\sup_{\,|\phi\rangle\in \mathcal L}\, \bigl|\langle\phi|\psi\rangle\bigr|^2
\end{align}
defines a signal, where $\mathcal L$ denotes the set of all layerwise-separable pure states. We will not explore such non-additive signals in this paper.

\subsection*{GME measures}
Signals and pre-signals may be viewed as stepping stones toward genuine multipartite entanglement (GME) measures, which in addition 
satisfy positivity and monotonicity under LOCC on average. Unfortunately, 
\begin{theorem}\label{signal_no}
No signal can serve as a genuine multipartite entanglement (GME) measure.
\end{theorem}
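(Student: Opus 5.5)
The strategy is to show that a signal cannot satisfy both defining properties of a GME measure: faithfulness (or at least positivity with nonzero value on some GME state) and monotonicity under LOCC. The obstruction is that a GME state can be turned into a layerwise-separable state by LOCC, and the reverse conversion is also possible. A GME measure $E$ is required to be nonincreasing on average under LOCC, $E(|\psi\rangle)\ge \sum_k p_k E(|\psi_k\rangle)$, and positive, with $E>0$ on at least some GME states. So it suffices to exhibit a GME state $|\psi\rangle$ together with an LOCC protocol mapping a layerwise-separable state $|\phi\rangle$ deterministically, or with a nonzero-probability outcome, to $|\psi\rangle$. Monotonicity would then force $0=f(|\phi\rangle)\ge p\, f(|\psi\rangle)$, contradicting positivity unless $f(|\psi\rangle)=0$.

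The key construction takes $q=3$ for concreteness, with parties $A,B,C$. Let $|\phi\rangle$ be the product of two layers: a Bell pair $|\Phi^+\rangle_{A^{(1)}B^{(1)}}$ (with $C^{(1)}$ trivial) and a Bell pair $|\Phi^+\rangle_{B^{(2)}C^{(2)}}$ (with $A^{(2)}$ trivial). Each layer is separable, so $|\phi\rangle$ is layerwise-separable and $f(|\phi\rangle)=0$ for any signal $f$. Using these two shared Bell pairs, $B$ locally prepares the GHZ state $(|000\rangle+|111\rangle)/\sqrt2$ on three qubits $b_A,b_B,b_C$. She then teleports $b_A$ to $A$ through the first pair and $b_C$ to $C$ through the second. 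This uses only local operations and classical communication, and after the Pauli corrections it succeeds deterministically. The result is the GHZ state shared among $A,B,C$, tensored with local ancilla states that can be discarded or reset by local unitaries. The GHZ state is GME. The same template works for any $q$: distribute Bell pairs in a star layout, one layer per leaf, and teleport an $n$-party GHZ state.

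For the final step, let $f$ be a signal and suppose it were a GME measure. By LOCC monotonicity, $f(|\phi\rangle)\ge f(|\mathrm{GHZ}\rangle)$, where the GHZ state is padded with local ancillas so that it lives in the same Hilbert space as $|\phi\rangle$. Padding with product ancillas is itself a layer decomposition whose extra layer is completely separable, so this is legitimate. Combined with $f\ge0$, this gives $f(|\mathrm{GHZ}\rangle)=0$. Applying the same argument to any GME state $|\psi\rangle$ of local dimension $d$, using maximally entangled pairs of dimension $d$ and teleporting a locally prepared $|\psi\rangle$, shows that $f$ vanishes identically on all pure states. Such an $f$ fails faithfulness, i.e. being nonzero on GME states, so it cannot be a GME measure.

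The main obstacle is bookkeeping the Hilbert space, because LU-invariant functions are defined on a fixed $\bigotimes_a\mathcal H_{A_a}$. The target state must live in the same space as the source, with the consumed Bell-pair qubits reset to product states. I would handle this by noting that after teleportation the spent qubits hold known product states that local unitaries can rotate to $|0\rangle$. This leaves a layer decomposition of the form GME layer $\otimes$ completely separable layer, and $f$ is evaluated on that state. I would also state explicitly the convention that monotonicity may be used with deterministic LOCC, which is the weakest form and suffices here.
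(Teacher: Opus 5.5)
Your argument is correct and follows the paper's outline. Both proofs start from the same layerwise-separable state: a Bell pair between $A$ and $B$ in one layer and a Bell pair between $B$ and $C$ in another. Both convert it by LOCC into a GHZ state, then set average monotonicity, $0=f(|\phi\rangle)\ge\sum_k p_k f(|\psi_k\rangle)$, against positivity on GHZ-type states.

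What differs is the protocol. The paper lets $B$ perform a single two-outcome parity measurement on $B_1B_2$ (the Kraus operators $E_1,E_2$), which fuses the two Bell pairs directly. Both outcomes are GHZ up to a local unitary, so the paper needs no ancillas, no classical communication and no Hilbert-space bookkeeping; everything stays inside two qubits per party. Your teleportation route costs extra registers on $B$, Bell measurements, and resetting the spent qubits. The Pauli corrections, and hence the communication, could in fact be dropped, since each branch is GHZ up to a local Pauli and $f$ is LU-invariant.

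In exchange, your route is universal. The star of Bell pairs handles every $q\ge 3$ at once, which the paper's three-party proof does not do as written: padding it with extra parties fails, since $|\mathrm{GHZ}\rangle_{ABC}\otimes|0\rangle_D$ is separable. It also shows that a non-negative LOCC-monotone signal vanishes on every state that fits, together with the teleportation resources, into the ambient space. That concerns suitably embedded states rather than literally all pure states of a fixed space, but it is already stronger than the theorem needs.

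One caveat applies to both proofs: $q\ge 3$ is essential. At $q=2$ your star layer is itself non-separable. Indeed, at $q=2$ the entropy $S_A$ is an additive pre-signal, hence a signal, and it is also a genuine entanglement measure.
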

\noindent
We prove Theorem~\ref{signal_no} in Appendix~\ref{no_signal} by showing that, for signals, LOCC monotonicity is
incompatible with positivity on non-layerwise-separable states. In particular, we exhibit a local operation that
maps a layerwise-separable state to a non-layerwise-separable state.

This limitation does not obstruct the construction of GME measures, since a GME measure only needs to be a pre-signal
(in addition to being positive and LOCC monotone). In this sense, pre-signals are more natural objects than signals
from a resource-theoretic point of view.

A positive, LOCC-monotone pre-signal can be obtained in a straightforward way by taking a bipartite entanglement
measure and minimizing it over all bipartitions of a $q$-partite state. However, such a quantity is designed to detect
whether \emph{some} bipartition is weakly entangled and therefore does not faithfully capture the genuinely multipartite
character of entanglement.
It would be interesting to explore if the general construction of pre-signals presented here can help in the construction of interesting GME measures. We leave this problem for future work.

\subsection*{Symmetric LU-invariants}
There is a natural coarse-graining map from $q$-partite states to $m$-partite states, specified by a partition
$\pi\in \Pi_X$ with $|\pi|=m$. Writing $\pi=\{B_1,\ldots,B_m\}$, we group together all parties whose labels lie in the
same block and regard their tensor product as a single effective subsystem. Concretely, a state
$|\psi\rangle\in \bigotimes_{A_a\in X}\mathcal H_{A_a}$ may be viewed as an $m$-partite state
\[
|\psi\rangle\in \bigotimes_{i=1}^m \mathcal H_{B_i},
\qquad\text{where}\qquad
\mathcal H_{B_i}:=\bigotimes_{A_a\in B_i}\mathcal H_{A_a},
\]
with the $i$th party corresponding to the block $B_i$. We denote the associated $\pi$-coarse-graining map by
${\rm grp}_\pi$. For example, if $X=\{A,B,C,D,E\}$ and $\pi=\{\{A,B\},\{C,D,E\}\}$, then ${\rm grp}_\pi$ allows us to
regard the $5$-partite state $|\psi\rangle_{A,B,C,D,E}$ as a bipartite state
$|\psi\rangle_{\{A,B\},\{C,D,E\}}$.

The coarse-graining map becomes particularly useful when we consider LU-invariant functions that are symmetric under
permutations of the parties. Unequal local dimensions can appear to obstruct the definition of such functions, but this
is easily bypassed by embedding each local Hilbert space into a space of sufficiently high dimension, say $d$.  Choose linear isometric injections\footnote{Equivalently, fix an isometric identification of
$\mathcal H_{A_a}$ with a $d_a$-dimensional subspace of $\tilde{\mathcal H}_{A_a}\cong \mathbb C^{d}$.}
$i_a:\mathcal H_{A_a}\hookrightarrow \tilde{\mathcal H}_{A_a}$ with $\dim \tilde{\mathcal H}_{A_a}=d$ for all $a$, and
define the embedded state
\[
|\tilde\psi\rangle := \bigotimes_{a=1}^q i_a\,|\psi\rangle \;\in\; \bigotimes_{a=1}^q \tilde{\mathcal H}_{A_a}.
\]
Any LU-invariant function of $|\tilde\psi\rangle$ induces an LU-invariant function of $|\psi\rangle$ by composition with
the embedding $|\psi\rangle\mapsto|\tilde\psi\rangle$. In what follows we restrict attention to LU invariants obtained
in this way.

The advantage of working with $|\tilde\psi\rangle$ is that all parties now have the same local dimension, so we may
freely permute the tensor factors and evaluate the same LU invariant on the permuted state.  Accordingly, from now on
we work with the embedded state but suppress the tilde notation; that is, we assume without loss of generality that all
local Hilbert spaces $\mathcal H_{A_a}$ have the same dimension $d$.

\begin{definition}
An LU-invariant function $f:\bigotimes_{a=1}^q \mathcal H_{A_a}\to\mathbb C$ is called \emph{symmetric} if it is
invariant under permutations of the parties.
\end{definition}
\begin{definition}
    Given a symmetric $q$-partite LU-invariant $f$, we define its restriction to $m$-partite states $f|_m$ as follows.
\begin{align}
    f|_m\bigl(|\psi\rangle\bigr)
    \;:=\;
    f\bigl(|\psi\rangle \otimes |0\rangle^{\otimes (q-m)}\bigr),
\end{align}
where $|\psi\rangle$ is an $m$-partite state and the argument of $f$ on the right-hand side is the $q$-partite state
obtained by adjoining $q-m$ additional parties in a fixed reference state $|0\rangle$. 
\end{definition}
\noindent 
As $f$ is LU-invariant, the restriction $f|_m$ does not depend on the choice of ancillary reference state.
\begin{definition}
Given a symmetric $m$-partite LU-invariant $f$, we define its  $\pi$-extension $f_\pi$ ($\pi \in \Pi_X, |\pi|=m$) on $q$-partite states as the LU-invariant obtained by composing $f$ with the $\pi$-coarse-graining map
${\rm grp}_\pi$ i.e.
\begin{align}\label{extension}
    f_{\pi}\bigl(|\psi\rangle\bigr)
    \;:=\;
    f\!\left({\rm grp}_\pi\bigl(|\psi\rangle\bigr)\right).
\end{align}
\end{definition}
The two notions above, restriction and $\pi$-extension, allow us to produce a class of LU-invariants labeled by $\pi\in \Pi_X$ starting from any symmetric LU-invariant by first restricting it to an $m$-partite state and then $\pi$-extending the resulting $m$-partite LU-invariant to $q$-partite states by composition with the coarse-graining map ${\rm grp}_\pi$.

\section{General construction}\label{general_construct}
In this section we will give a general construction of $q$-partite symmetric signals and pre-signals from a family of $m$-partite LU-invariants, $m\leq q$, satisfying a certain compatibility condition.  This requires us to  review some facts about the lattice structure of the partition set $\Pi_X$. A natural partial ordering exists on the set of partitions. If $\kappa$ is obtained from $\pi$ by further partitioning $\pi$ then we say that $\kappa$ is finer than $\pi$ or equivalently, $\pi$ is coarser than $\kappa$. We denote this as $\kappa\leq \pi$.
Given two elements $\pi$ and $\kappa$ of a partially ordered set (poset), we define their join $\pi\vee\kappa$ as the smallest element that is larger than or equal to both and their meet $\pi\wedge\kappa$ as the largest element that is smaller than or equal to both. If the poset has the property that the meet and join of any pair of elements is unique then it is called a lattice. The poset of partitions is a lattice. Example, if $\pi=\{\{A,B\},\{C\}\}$ and $\kappa=\{\{A,C\},\{B\}\}$ then $\pi\vee \kappa=\{\{A,B,C\}\}$ and $\pi\wedge \kappa=\{\{A\},\{B\},\{C\}\}$. Note that, for any $\pi$,
\begin{align}
    \pi\vee {\bf 0}=\pi,\quad \pi\vee {\bf 1}={\bf 1},\quad \pi\wedge {\bf 0}={\bf 0}, \quad \pi\wedge {\bf 1}=\pi.\notag
\end{align}

Lastly, we need to review the M\"obius inversion formula on the partition lattice \cite{Rota1964Mobius, Stanley2011EC1}. 
The M\"obius function $\mu(\kappa, \pi)$ is defined for a pair of  partitions $\kappa\leq \pi$. It is the unique function that obeys,
\begin{align}
    \sum_{\kappa \leq \tau\leq \pi} \mu(\kappa, \tau)=\delta_{\kappa, \pi}.
\end{align} 
As a result, it can be used to perform the following inversion.
If $f,g:\Pi_X\to A$ are functions of partitions into an abelian group $A$ such that
\begin{align}
    g(\pi)=\sum_{\kappa\leq \pi} f(\kappa)
\end{align}
then
\begin{align}
    f(\pi)=\sum_{\kappa\leq \pi} \mu(\kappa, \pi) g(\kappa).
\end{align}
This also holds if $\leq$ is replaced by $\geq$ in both of the above equations. M\"obius inversion on the partition lattice plays a role analogous to the passage from moments to cumulants in probability theory; see, for example, \cite{Speed1983Cumulants}.

The M\"obius function $\mu(\kappa, \pi)$ is evaluated as follows. As $\kappa \leq \pi$, blocks of $\kappa$ are obtained by decomposing blocks of $\pi$. Let $k_B$ be the number of $\kappa$ blocks obtained from the block $B$ of $\pi$. Then,
\begin{align}
    \mu(\kappa, \pi)=\prod_{B\in \pi} (-1)^{k_B-1}(k_B-1)!
\end{align}
Here the product is taken over all blocks $B$ of $\pi$. The special case of $\mu(\pi, {\bf 1})$ will be particularly useful to us. The general formula reduces to,
\begin{align}
    \mu(\kappa, {\bf 1})=(-1)^{|\kappa|-1}(|\kappa|-1)!
\end{align}

Now we are ready to get back to the general construction of additive signals and not-necessarily additive pre-signals.
\begin{definition}
    A family $\{f_{m}\}, m=1, \ldots, q$ of $m$-partite symmetric LU-invariants is called compatible if it obeys
    \begin{align}\label{compatible}
        f_{|\pi|,\pi}(|\psi\rangle_\kappa)=f_{|\pi\wedge \kappa|,\pi\wedge \kappa}(|\psi\rangle_\kappa).
    \end{align}
    Here $|\psi\rangle_\kappa$ is a $\kappa$-separable state.
\end{definition}
\noindent
To avoid redundancy and clutter, we denote the $\pi$-extension of $f_m$ of a compatible family as $f_\pi$ rather than $f_{|\pi|, \pi}$.
\begin{remark}\label{extend}
    If $\{f_{m}^{(i)}\}, m=1, \ldots, q$  is a compatible family for each $i=1,\ldots, r$  then $\{g(f_{m}^{(i)})\}, m=1, \ldots, q$ is also a compatible family where $g:{\mathbb C}^r\to {\mathbb C}$.
\end{remark}

\begin{lemma}\label{additive_comp}
    If $f$ is a symmetric and additive $q$-partite LU-invariant then $\{f|_m\}, m=1,\ldots, q$ is a compatible family.
\end{lemma}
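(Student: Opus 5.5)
The plan is to evaluate both sides of \eqref{compatible} on a $\kappa$-separable state $|\psi\rangle_\kappa=\bigotimes_{C\in\kappa}|\psi_C\rangle$ by splitting each into layers indexed by the blocks $C\in\kappa$. Additivity of $f$ reduces each side to a sum over $C$. The aim is then to show that the $C$-th terms on the two sides coincide.

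\textbf{Layers on the left-hand side.} I first compute $f|_{|\pi|}\bigl({\rm grp}_\pi|\psi\rangle_\kappa\bigr)$. After coarse-graining by $\pi$, the party labelled by a block $B\in\pi$ carries $\mathcal H_B=\bigotimes_{C\in\kappa}\mathcal H_{B\cap C}$. Here $\mathcal H_{B\cap C}$ is read as a trivial one-dimensional factor when $B\cap C=\emptyset$. The nonempty sets $B\cap C$ are exactly the blocks of $\pi\wedge\kappa$. With respect to this factorization, the state $|\psi\rangle_\kappa$, padded by $|0\rangle^{\otimes(q-|\pi|)}$, is a layer decomposition with layers $|\psi_C\rangle$, $C\in\kappa$. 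The padding ancillas are absorbed into any one layer, or equivalently each ancilla is split as $|0\rangle\otimes|0\rangle\otimes\cdots$. Additivity then gives
\[
f|_{|\pi|}\bigl({\rm grp}_\pi|\psi\rangle_\kappa\bigr)=\sum_{C\in\kappa} f\bigl(|\Phi_C\rangle\bigr).
\]
Here $|\Phi_C\rangle$ is the $q$-partite state in which $|\psi_C\rangle$ sits on the parties $\{B\cap C\}_{B\in\pi,\,B\cap C\neq\emptyset}$ and every other party holds $|0\rangle$.

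\textbf{Layers on the right-hand side.} I apply the same procedure to $f|_{|\pi\wedge\kappa|}\bigl({\rm grp}_{\pi\wedge\kappa}|\psi\rangle_\kappa\bigr)$. Now each party is a block $D=B\cap C$ of $\pi\wedge\kappa$, and each $D$ lies in a unique $C$. The layer indexed by $C$ therefore places $|\psi_C\rangle$ on the parties $\{D\in\pi\wedge\kappa: D\subseteq C\}$, with $|0\rangle$ elsewhere. These are the same sets $B\cap C$ as before. Hence each layer is, up to a permutation of parties, the same state $|\Phi_C\rangle$, so symmetry of $f$ identifies the two sums term by term.

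\textbf{Main obstacle.} The technical point is bookkeeping of Hilbert spaces. A layer decomposition requires every party to factorize into the same number of layers. Both trivial factors and padding ancillas must therefore be accommodated, and all parties must still be embedded in a common dimension $d$ so that symmetry is meaningful. I will handle this using three facts:
\begin{itemize}
\item LU invariance, which makes the restriction independent of the reference state;
\item the embedding convention, which allows a one-dimensional factor to be replaced by $|0\rangle$ in a larger space;
\item the observation that appending a $|0\rangle$ tensor factor to a party is a local isometry, which does not change $f$ by the same reasoning that makes $f|_m$ well defined.
\end{itemize}
With these conventions both sides become literally the same sum $\sum_{C\in\kappa}f(|\Phi_C\rangle)$, which proves the lemma.
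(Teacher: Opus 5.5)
Your proposal is correct and follows essentially the same route as the paper's proof. You factor each coarse-grained party along the blocks of $\kappa$ (the nonempty intersections are the blocks of $\pi\wedge\kappa$), read $|\psi\rangle_\kappa$ as a layer decomposition indexed by the blocks of $\kappa$, apply additivity, and match the layers on the two sides using symmetry and $|0\rangle$ padding. The only difference is cosmetic: you keep each term as $f$ on the padded $q$-partite state $|\Phi_C\rangle$, whereas the paper strips the $|0\rangle$ parties to write it as $f|_{r_j}$, and the two agree by the definition of restriction.
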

\noindent The lemma is proved in Appendix~\ref{lemma_proof}. We offer some intuition here through the following simple example. 

Let $X=\{A,B,C\}$, $\kappa=AB|C$, $\pi=A|BC$, so that
$\pi\wedge\kappa=A|B|C$.  Take the $\kappa$-separable state
$|\psi\rangle=|\psi_{AB}\rangle\otimes|\psi_C\rangle$.  We verify
$f_\pi(|\psi\rangle)=f_{\pi\wedge\kappa}(|\psi\rangle)$.

Coarse-graining by $\pi$ groups the parties as $A$ and $BC$, giving the
bipartite state $|\psi_{AB}\rangle\otimes|\psi_C\rangle$ on
$\mathcal{H}_A\otimes(\mathcal{H}_B\otimes\mathcal{H}_C)$.
Coarse-graining by $\pi\wedge\kappa$ instead treats all three parties
separately.  In both cases, additivity of $f$ reduces the computation to
the same two independent pieces: $f$ evaluated on $|\psi_{AB}\rangle$
and $f$ evaluated on $|\psi_C\rangle$.

\medskip\noindent\textit{Computing $f_\pi(|\psi\rangle_\kappa)$.}
Since $\pi=A|BC$, the map $\mathrm{grp}_\pi$ views
$|\psi\rangle_\kappa=|\psi_{AB}\rangle\otimes|\psi_C\rangle$ as the bipartite
state $|\psi_{AB}\rangle\otimes|\psi_C\rangle\in\mathcal{H}_A\otimes
(\mathcal{H}_B\otimes\mathcal{H}_C)$.  By the layer decomposition of Step~1,
this bipartite state splits into layers $|\psi\rangle^{(1)}_\pi=|\psi_{AB}
\rangle$ (occupying both parties nontrivially, $r_1=2=|\pi|$) and
$|\psi\rangle^{(2)}_\pi=|0\rangle_{C_1}\otimes|\psi_C\rangle_{C_2}$ (with
party $C_1$ in the reference state, since $C_1\cap B_2=\emptyset$).
Additivity gives $f|_2(|\psi\rangle^{(1)}_\pi)=f|_2(|\psi_{AB}\rangle)$,
and stripping the reference-state party via symmetry and Definition~5 gives
$f|_2(|\psi\rangle^{(2)}_\pi)=f|_1(|\psi_C\rangle)$, so
\[
f_\pi(|\psi\rangle_\kappa)
=f|_2\!\left(|\psi_{AB}\rangle\right)+f|_1\!\left(|\psi_C\rangle\right).
\]

\medskip\noindent\textit{Computing $f_{\pi\wedge\kappa}(|\psi\rangle_\kappa)$.}
Since $\pi\wedge\kappa=A|B|C$, the map $\mathrm{grp}_{\pi\wedge\kappa}$ is the
identity and we work with the tripartite state directly.  The two layers are
$|\psi\rangle^{(1)}_{\pi\wedge\kappa}=|\psi_{AB}\rangle\otimes|0\rangle_C$
and $|\psi\rangle^{(2)}_{\pi\wedge\kappa}=|0\rangle_A\otimes|0\rangle_B
\otimes|\psi_C\rangle$.  Stripping the reference-state parties sequentially
via Definition~5,
\begin{align}
f|_3\!\left(|\psi_{AB}\rangle\otimes|0\rangle_C\right)&=f|_2\!\left(|\psi_{AB}
\rangle\right),\notag\\
f|_3\!\left(|0\rangle_A\otimes|0\rangle_B\otimes|\psi_C\rangle\right)
&=f|_1\!\left(|\psi_C\rangle\right),
\end{align}
giving $f_{\pi\wedge\kappa}(|\psi\rangle_\kappa)=f|_2(|\psi_{AB}\rangle)
+f|_1(|\psi_C\rangle)$.

\medskip\noindent\textit{Why the two agree.}
Despite operating at different coarse-graining levels -- bipartite for
$f_\pi$ and tripartite for $f_{\pi\wedge\kappa}$ -- both computations reduce
via additivity and stripping to the same two contributions: $f|_2$ on
$|\psi_{AB}\rangle$ and $f|_1$ on $|\psi_C\rangle$.  The separability of
$|\psi\rangle_\kappa$ across $\kappa$ is what forces this reduction, making
the intermediate coarse-graining level irrelevant.

Lemma~\ref{additive_comp} and Remark~\ref{extend} yield an infinite class of compatible families. Lemma~\ref{additive_comp} gives us a compatible family that is additive. If we choose $g$ to be a non-linear function then from additive compatible families we get a non-additive compatible family. 

Now we will give general constructions of the following:
\begin{itemize}
    \item Symmetric additive signal from an additive compatible family.
    \item Symmetric non-additive pre-signal from a non-additive compatible family. 
\end{itemize}

\subsection{Symmetric signal and pre-signal}
We will make the following ansatz for a signal:
\begin{align}\label{ansatz_signal}
    s(|\psi\rangle)\;:=\;\sum_{\pi\in \Pi_X} c_\pi\, f_\pi(|\psi\rangle).
\end{align}
Here $f_\pi:=f_{|\pi|,\pi}$ denotes the extension of a $|\pi|$-partite LU-invariant from an \emph{additive compatible}
family to $q$-partite states via coarse-graining. Since $s$ is a linear combination of additive LU-invariants, it is
additive as well. By Lemma~\ref{key_lemma}, it suffices to impose the vanishing condition on $\kappa$-separable states
with type $\lambda(\kappa)=(1,q-1)$. Using compatibility~\eqref{compatible}, we obtain
\begin{align}\label{signal_constraints}
    &\sum_{\pi\in\Pi_X} c_\pi\, f_{\pi\wedge \kappa}\bigl(|\psi\rangle_\kappa\bigr)=0,\notag\\
    &\quad\text{for all }\kappa\text{ with }\lambda(\kappa)=(1,q-1).
\end{align}

\medskip
We use the same strategy to obtain a symmetric (generally non-additive) pre-signal from a non-additive compatible
family. We make the analogous ansatz
\begin{align}\label{ansatz_presignal}
    p(|\psi\rangle)\;:=\;\sum_{\pi\in \Pi_X} c_\pi\, \tilde f_\pi(|\psi\rangle).
\end{align}
Here $\tilde f_\pi:=\tilde f_{|\pi|,\pi}$ denotes the extension of a $|\pi|$-partite LU-invariant from a (not
necessarily additive) compatible family to $q$-partite states. For $p(|\psi\rangle)$ to be a pre-signal, it must
vanish on all separable states, i.e.\ on all $\kappa$-separable states with $\kappa\neq \mathbf 1$.
Using compatibility~\eqref{compatible}, this becomes
\begin{align}\label{presignal_constraints}
    \sum_{\pi\in\Pi_X} c_\pi\, \tilde f_{\pi\wedge \kappa}\bigl(|\psi\rangle_\kappa\bigr)=0,
    \qquad \text{for all }\kappa\neq \mathbf 1 .
\end{align}

\medskip
The structural similarity of \eqref{signal_constraints} and \eqref{presignal_constraints} becomes transparent once we
encode the coefficients $\{c_\pi\}$ as a formal linear combination of partitions in $\mathbb C[\Pi_X]$, with the meet
operation extended bilinearly:
\begin{align}\label{formal_vector}
    V\;:=\;\sum_{\pi\in\Pi_X} c_\pi\,\pi\ \in\ \mathbb C[\Pi_X].
\end{align}
Given a subset ${\cal K}\subseteq \Pi_X$, we impose the \emph{meet-vanishing constraints}
\begin{align}\label{meet_vanishing}
    V\wedge \kappa \;=\; 0,\qquad \forall\,\kappa\in{\cal K}.
\end{align}
The two problems above correspond to different choices of ${\cal K}$ (singleton cuts in the additive case, and
${\cal K}=\Pi_X\setminus\{\mathbf 1\}$ in the pre-signal case). In Appendix~\ref{useful_lemmas} we solve
\eqref{meet_vanishing} for arbitrary ${\cal K}$. 

Let us denote by $\downarrow{\cal K}$ the downset of ${\cal K}$ namely the  closure of  ${\cal K}$ under refinements:
\begin{align}
    \downarrow{\cal K}=\{\rho\in \Pi_X: \exists \kappa\in {\cal K}\ \text{with}\ \rho\leq \kappa\}.
\end{align}
We show in Appendix~\ref{useful_lemmas} that,
\begin{align}
    V\wedge {\cal K}=0\quad \Rightarrow \quad V\wedge \downarrow{\cal K}=0.
\end{align}
Denoting $\Pi_X\setminus \downarrow {\cal K}=:{\cal N}$, we also show that the most general solution to \eqref{meet_vanishing} is
\begin{align}\label{general_solution_meet}
    V \;=\; \sum_{\rho\in{\cal N}} a_\rho\, M_\rho,
    \quad\text{where}\quad
    M_\rho \;:=\; \sum_{\pi\le \rho} \mu(\pi,\rho)\,\pi,
\end{align}
for arbitrary coefficients $a_\rho\in\mathbb C$, and where $\mu$ denotes the M\"obius function of the partition lattice. In the case of additive signals, ${\cal N}$ consists of partitions that do not contain any block of size $1$ and in the case of non-additive pre-signals, ${\cal N}$ consists of a single partition ${\bf 1}$. Hence we have two theorems,
\begin{theorem}
    If $f_m$ is an additive compatible family of LU-invariants then
    \begin{align}
        \sum_{\rho\in{\cal N}} a_\rho\, M_\rho[f],
    \quad\text{where}\quad
    M_\rho[f] \;:=\; \sum_{\pi\le \rho} \mu(\pi,\rho)\,f_\pi,
    \end{align}
    is an additive signal for $a_\rho\in {\mathbb C}$. Here ${\cal N}$ consists of all partitions that do not have a block of size $1$.
\end{theorem}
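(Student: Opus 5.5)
Since each $f_\pi$ is additive, so is any linear combination $s:=\sum_{\rho\in{\cal N}} a_\rho M_\rho[f]$. By Lemma~\ref{key_lemma} it is therefore enough to show that $s$ vanishes on every $\kappa$-separable state with $\lambda(\kappa)=(1,q-1)$. For such a state, compatibility \eqref{compatible} gives $f_\pi(|\psi\rangle_\kappa)=f_{\pi\wedge\kappa}(|\psi\rangle_\kappa)$. Hence, if $V=\sum_\pi c_\pi\,\pi$ is the formal vector \eqref{formal_vector} of coefficients of $s$, then $s(|\psi\rangle_\kappa)$ is the linear functional $\pi\mapsto f_\pi(|\psi\rangle_\kappa)$ evaluated on $V\wedge\kappa$. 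The whole task reduces to the purely combinatorial identity $M_\rho\wedge\kappa=0$ in $\mathbb C[\Pi_X]$ for every $\rho\in{\cal N}$ and every singleton cut $\kappa=\{\{A_a\},X\setminus\{A_a\}\}$.

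To prove this identity I will use a standard Möbius-algebra argument. For any $\kappa$, the map $\pi\mapsto\pi\wedge\kappa$ sends the interval $[{\bf 0},\rho]$ onto $[{\bf 0},\rho\wedge\kappa]$. Grouping terms,
\begin{align}
M_\rho\wedge\kappa=\sum_{\sigma\le\rho\wedge\kappa}\Bigl(\sum_{\pi\le\rho,\ \pi\wedge\kappa=\sigma}\mu(\pi,\rho)\Bigr)\sigma .
\end{align}
By Weisner's theorem (the dual form: for $\kappa'<\rho$ in a lattice, $\sum_{\pi\le\rho,\ \pi\wedge\kappa'=\sigma}\mu(\pi,\rho)=0$), the inner sum vanishes whenever $\rho\wedge\kappa<\rho$. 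Alternatively, I can verify it directly with Möbius inversion. Writing $\zeta$-sums over $\sigma'\le\sigma$ turns the inner sum into $\sum_{\pi\le\rho\wedge\ldots}$, which collapses by $\sum_{\pi\le\tau\le\rho}\mu(\tau,\rho)=\delta_{\pi\rho}$. So $M_\rho\wedge\kappa=0$ as soon as $\rho\not\le\kappa$. This is the step I expect to require the most care: I must get the direction of the Weisner identity right and make sure the grouping over fibers of $\pi\mapsto\pi\wedge\kappa$ is legitimate. If quoting Weisner feels too opaque, I will instead prove the key equality $\sum_{\pi\le\rho:\,\pi\wedge\kappa\ge\sigma}\mu(\pi,\rho)=\sum_{\sigma\vee?}$ by noting that $\{\pi\le\rho:\pi\ge\sigma\}$ is not directly it. Rather, I will use $\pi\wedge\kappa\ge\sigma\iff\pi\ge\sigma$ for $\sigma\le\kappa$, which gives $\sum_{\sigma\le\pi\le\rho}\mu(\pi,\rho)=\delta_{\sigma\rho}$. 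Since $\sigma\le\kappa$ and $\rho\not\le\kappa$, we have $\sigma\ne\rho$, so every cumulative sum vanishes and hence every fiber sum vanishes by inversion on $[{\bf 0},\rho\wedge\kappa]$.

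It remains to check that $\rho\not\le\kappa$ for every $\rho\in{\cal N}$ and every singleton cut $\kappa$. A partition $\rho\le\kappa$ must refine $\{A_a\}$ as a block, so $\rho$ would contain the singleton block $\{A_a\}$. Partitions in ${\cal N}$ have no block of size $1$, so $\rho\not\le\kappa$, which gives $M_\rho\wedge\kappa=0$. Linearity in $a_\rho$ then shows $s(|\psi\rangle_\kappa)=0$ for all type-$(1,q-1)$ separable states, and Lemma~\ref{key_lemma} upgrades $s$ to an additive signal.

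Finally, I will note the role of the $\tilde f$-independence. The constraint was imposed at the level of $\mathbb C[\Pi_X]$, so no property of $f$ beyond additivity and compatibility is used. This also shows the theorem is the specialization of \eqref{general_solution_meet} to ${\cal K}$ equal to the singleton cuts, whose downset is exactly the set of partitions with at least one singleton block.
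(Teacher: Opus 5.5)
Your proof is correct. The outer structure is exactly the paper's:
\begin{itemize}
\item additivity of the linear combination;
\item reduction via Lemma~\ref{key_lemma} to type-$(1,q-1)$ cuts;
\item compatibility turning evaluation into the meet-vanishing problem $V\wedge\kappa=0$;
\item the observation that $\rho\in{\cal N}$ is never $\le$ a singleton cut.
\end{itemize}
You differ in how you prove the combinatorial core, $M_\rho\wedge\kappa=0$ for $\rho\nleq\kappa$.

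The paper first proves a separate lemma describing each fiber as $\{\pi\le\rho:\pi\wedge\kappa=\tau\}=[\tau,\rho]\setminus\bigcup_{m\in M}[m,\rho]$, where $M$ is the set of elements of $[\tau,\rho\wedge\kappa]$ covering $\tau$. It then applies inclusion--exclusion over subsets $J\subseteq M$, which reduces every term to $\sum_{\pi\in[x_J,\rho]}\mu(\pi,\rho)=0$ with $x_J\le\rho\wedge\kappa<\rho$.

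You instead do the following, which is the standard proof of (dual) Weisner's theorem:
\begin{itemize}
\item sum the fiber coefficients over the up-set $\{\tau:\sigma\le\tau\le\rho\wedge\kappa\}$;
\item use $\pi\wedge\kappa\ge\sigma\iff\pi\ge\sigma$, valid because $\sigma\le\kappa$, to identify this cumulative sum with $\sum_{\sigma\le\pi\le\rho}\mu(\pi,\rho)=\delta_{\sigma\rho}=0$;
\item recover each fiber sum by upward M\"obius inversion on $[{\bf 0},\rho\wedge\kappa]$.
\end{itemize}

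Your route is shorter and needs neither the fiber lemma nor inclusion--exclusion. The paper's route gives an explicit description of the fibers, illustrated by its $|X|=4$ example. It also sits inside an appendix that proves the converse, that the $M_\rho$ with $\rho\in{\cal N}$ span all solutions, which the theorem as stated does not need.

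In a write-up, remove the false start. The cumulative sums must run over $\sigma'\ge\sigma$, not over $\sigma'\le\sigma$ as your first sentence on this says, and the half-written equality ending in a question mark should go. Also state that you use the dual form $\sum_{\sigma\le\pi\le\rho}\mu(\pi,\rho)=\delta_{\sigma\rho}$ of the defining relation. It holds because $\mu$ is a two-sided inverse of $\zeta$ in the incidence algebra, and the paper's proof relies on the same form.
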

\noindent 
In Appendix~\ref{useful_lemmas}, we also show that, in fact the space of additive signals is spanned by $M_\rho[f]$ for $\rho \in {\cal N}$.
\begin{theorem}
    If $\tilde f_m$ is a compatible family of LU-invariants then
    \begin{align}
    M_{\bf 1}[\tilde f]=\sum_{\pi\neq {\bf 1}} \mu(\pi,{\bf 1})\,{\tilde f}_\pi,
    \end{align}
    is a  pre-signal.
\end{theorem}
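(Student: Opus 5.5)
The plan is a direct substitution. Evaluate $M_{\bf 1}[\tilde f]$ on an arbitrary $\kappa$-separable state with $\kappa\neq{\bf 1}$. Use the compatibility condition \eqref{compatible} to collapse every term onto a partition below $\kappa$. Then show that the resulting coefficients vanish by a M\"obius-function identity on the partition lattice (a dual form of Weisner's theorem).

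I read the sum as running over all $\pi\in\Pi_X$, with the term $\pi={\bf 1}$ carrying $\mu({\bf 1},{\bf 1})=1$. This is what $M_{\bf 1}$ in \eqref{general_solution_meet} prescribes. LU-invariance is automatic, since $M_{\bf 1}[\tilde f]$ is a linear combination of the LU-invariants $\tilde f_\pi$.

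First step: fix $\kappa\neq{\bf 1}$ and a $\kappa$-separable $|\psi\rangle_\kappa$. Compatibility gives $\tilde f_\pi(|\psi\rangle_\kappa)=\tilde f_{\pi\wedge\kappa}(|\psi\rangle_\kappa)$. Grouping terms according to $\sigma=\pi\wedge\kappa$, which ranges over $\sigma\le\kappa$, yields
\begin{align}
M_{\bf 1}[\tilde f](|\psi\rangle_\kappa)=\sum_{\sigma\le\kappa}\tilde f_\sigma(|\psi\rangle_\kappa)\,F(\sigma),\qquad F(\sigma):=\sum_{\pi:\ \pi\wedge\kappa=\sigma}\mu(\pi,{\bf 1}).\notag
\end{align}
It therefore suffices to show that $F(\sigma)=0$ for every $\sigma\le\kappa$. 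This is the combinatorial core of the argument, and the step I expect to need care. It is exactly the statement ${M_{\bf 1}}\wedge\kappa=0$ in $\mathbb C[\Pi_X]$, so it could also be quoted from Appendix~\ref{useful_lemmas}. I would nevertheless prove it directly as follows.

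Second step: for $\tau\le\kappa$ define $G(\tau):=\sum_{\sigma:\,\tau\le\sigma\le\kappa}F(\sigma)=\sum_{\pi:\ \pi\wedge\kappa\ge\tau}\mu(\pi,{\bf 1})$. Since $\tau\le\kappa$, the condition $\pi\wedge\kappa\ge\tau$ is equivalent to $\pi\ge\tau$. Hence
\begin{align}
G(\tau)=\sum_{\tau\le\pi\le{\bf 1}}\mu(\pi,{\bf 1})=\delta_{\tau,{\bf 1}},\notag
\end{align}
by the defining relation of $\mu$ in its dual form, $\sum_{\tau\le\pi\le\rho}\mu(\pi,\rho)=\delta_{\tau,\rho}$, which holds for the M\"obius function of any finite poset. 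Because $\tau\le\kappa<{\bf 1}$, we get $G(\tau)=0$ for all $\tau\le\kappa$.

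Third step: M\"obius inversion in the $\geq$ form on the interval $[\,\cdot\,,\kappa]$ gives $F(\sigma)=\sum_{\sigma\le\tau\le\kappa}\mu(\sigma,\tau)G(\tau)=0$. Therefore $M_{\bf 1}[\tilde f]$ vanishes on every $\kappa$-separable state with $\kappa\neq{\bf 1}$, that is, on all separable states, so it is a pre-signal.

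No additivity is used anywhere in this argument, only compatibility, which is why the result holds for non-additive families as well.
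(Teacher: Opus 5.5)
Your proof is correct. Its outer structure is the paper's: compatibility reduces the pre-signal condition to the meet-vanishing identity $M_{\mathbf 1}\wedge\kappa=0$ in $\mathbb C[\Pi_X]$ for every $\kappa\neq\mathbf 1$.

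The difference lies in how you prove that identity. The paper proves the general statement $M_\rho\wedge\kappa=0$ for $\rho\not\le\kappa$ in two stages. First it describes each fiber $\{\pi\le\rho:\pi\wedge\kappa=\tau\}$ as $[\tau,\rho]$ with the upper intervals $[m,\rho]$ removed, where $m$ ranges over the covers of $\tau$ lying below $\rho\wedge\kappa$. Then it runs inclusion--exclusion over these covers, and each term is killed by $\sum_{\pi\in[x_J,\rho]}\mu(\pi,\rho)=0$.

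You bypass the fiber description entirely. You pass to the cumulative sums $G(\tau)$ and note that, when $\tau\le\kappa$, $\pi\wedge\kappa\ge\tau$ is equivalent to $\pi\ge\tau$. Every $G(\tau)$ then becomes a complete interval sum of $\mu$, and triangularity finishes the job. This is in effect the standard proof that the $M_\rho$ are orthogonal idempotents of the M\"obius algebra $(\mathbb C[\Pi_X],\wedge)$.

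Your route loses nothing in generality. Replacing $\mathbf 1$ by an arbitrary $\rho$, the same computation gives $G(\tau)=\delta_{\tau,\rho}$, which vanishes on all of $[\mathbf 0,\rho\wedge\kappa]$ whenever $\rho\not\le\kappa$. This reproduces, with less work, the identity the paper later uses for its spanning and uniqueness claims. What the paper's route offers in exchange is an explicit description of the fibers of $\pi\mapsto\pi\wedge\kappa$, which your argument never needs.

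Your reading of the sum as including $\pi=\mathbf 1$ is also the right one, since it matches the definition of $M_\rho$. The literal $\sum_{\pi\neq\mathbf 1}$ differs from it by $\tilde f_{\mathbf 1}$, which drops out of the paper's examples only because it vanishes there.
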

\noindent
In Appendix~\ref{useful_lemmas}, we also show that $M_{\bf 1}[\tilde f]$ is a unique pre-signal (up to scale) for a generic (non-additive) compatible family ${\tilde f}_m$.

\subsection{Examples}
In this subsection, we give explicit examples of signals constructed from some simple compatible families.
We construct these compatible families by choosing a symmetric $q$-partite additive LU-invariant as a seed and taking
its $m$-partite restrictions.  We will show that almost all signals  that appear
in the literature arise as special cases of our construction. The titles of the following subsections indicate the
choice of seed LU-invariant $f$.

\subsubsection{R\'enyi entropy}\label{renyi}
Let the seed $q$-partite LU-invariant be
\[
f_1=\sum_{a=1}^q S^{(n)}_{A_a},
\qquad
S^{(n)}_{A_a}=\frac{1}{1-n}\log \mathrm{Tr}\,\rho_{A_a}^n,
\]
the sum of single-party $n$th R\'enyi entropies. This is symmetric and additive, hence its restrictions $f|_m$
form an additive compatible family. The vector space of additive signals is spanned by \emph{M\"obius vectors}
$M_\rho[f_1]$, where $\rho$ has no singleton blocks. Interestingly, for odd $q$ all such $M_\rho[f_1]$ vanish, while for even
$q$ all $M_\rho[f_1]$ vanish except for $\rho=\mathbf 1$. We prove this in Appendix~\ref{vanishing}.
Below we evaluate $M_{\mathbf 1}[f_1]$ explicitly for $q=2,4$ and $6$, using purity of $|\psi\rangle$ to simplify. We use the symbols $A,B,C,\ldots$ for parties.

\begin{itemize}
\item $q=2$: $s=S^{(n)}_{AB}-S^{(n)}_{A}-S^{(n)}_{B}=-2S^{(n)}_{A}$.
\item $q=4$:$
s=S^{(n)}_{ABCD}-(S^{(n)}_{ABC}+\ldots)+(S^{(n)}_{AB}+\ldots)-(S^{(n)}_{A}+\ldots)
=(S^{(n)}_{AB}+\ldots)-2(S^{(n)}_{A}+\ldots).
$
\end{itemize}
For $q=6$, the purity-simplified expression is
\begin{itemize}
\item $q=6$: $s=(S^{(n)}_{ABC}+\ldots)-2(S^{(n)}_{AB}+\ldots)+2(S^{(n)}_{A}+\ldots)$.
\end{itemize}
This is precisely the $q$-information discussed in \cite{Balasubramanian:2024ysu}. As shown in Appendix~\ref{vanishing}, for general $q$, the signal takes the form
\begin{align}
    M_{\bf 1}[f_1]= \sum_{A\subset X} (-1)^{q-|A|}\,\,S^{(n)}_A.
\end{align}

\subsubsection{Sum of R\'enyi entropy and reflected entropy}
The signals constructed in Section~\ref{renyi} vanish for odd $q$.
To mitigate this issue, the authors of \cite{Balasubramanian:2024ysu} introduced a new signal called the
\emph{residual information}. For even $q$ it reproduces the $q$-information, while for odd $q$ it is non-vanishing for GME states.

Given a $q$-partite pure state with odd $q$, trace out one party, say $A_q$, obtaining the density matrix
$\rho_{A_1,\ldots,A_{q-1}}$ on $q-1$ parties. Canonically purify it to
$|\psi\rangle_{A_1,A_1^*,\ldots,A_{q-1},A_{q-1}^*}$. The seed LU-invariant is then
\[
f_2=\sum_{a=1}^{q-1}\left(\frac12 S_{A_aA_a^*}-S_{A_a}\right).
\]
If $A_q$ factorizes (so that $\rho_{A_1,\ldots,A_{q-1}}$ is pure), this seed invariant vanishes.
The corresponding M\"obius vector $M_{\mathbf 1}[f_2]$ vanishes whenever the $(q-1)$-partite density matrix factorizes.
This $M_{\mathbf 1}[f_2]$ is called the $q$-residual information.

\subsubsection{R\'enyi multi-entropy}
Let the seed LU-invariant to be the $n$th R\'enyi multi-entropy $f_3=S^{(n)}_{ q}$. For $n=1$ this reduces to the
multi-entropy. The R\'enyi multi-entropy is proportional to the logarithm of a multi-invariant; since multi-invariants
are multiplicative, R\'enyi multi-entropy is additive. Our construction reproduces the so-called genuine multi-entropy
$\mathrm{GM}^{(n)}$ introduced in \cite{Iizuka:2025ioc, Iizuka:2025caq}. An advantage of our approach is that it provides a
free parametrization of the signal space in terms of the spanning M\"obius vectors $M_\rho[f_3]$.

Below we evaluate the relevant spanning vectors for $q=2,3,4$. For $q=2,3$, the set $\mathcal N$ contains only
$\mathbf 1$:
\begin{itemize}
\item $q=2$: $s=-S^{(n)}_2(A,B)$.
\item $q=3$: $s=-(S^{(n)}_2(AB,C)+\ldots) +2\,S^{(n)}_3(A,B,C).$
\end{itemize}
Here we used $S^{(n)}_1(AB)=0$ for $q=2$ and $S^{(n)}_1(ABC)=0$ for $q=3$.

For $q=4$, the set $\mathcal N$ consists of the three pair partitions $\{AB|CD\}$, $\{AC|BD\}$, $\{AD|BC\}$, together
with $\mathbf 1$. Using $S^{(n)}_1(ABCD)=0$, we obtain
\begin{itemize}
\item $M_{AB|CD}= - S^{(n)}_3(A,B,CD) - S^{(n)}_3(AB,C,D) + S^{(n)}_4(A,B,C,D)$.
\item $M_{ABCD}=-(S^{(n)}_2(ABC,D)+\ldots)\\\\ -(S^{(n)}_2(AB,CD)+\ldots)
+2(S^{(n)}_3(AB,C,D)+\ldots)-6S^{(n)}_4(A,B,C,D)$.
\end{itemize}
Thus $\mathrm{GM}^{(n)}$ for $q=4$ is spanned by $M_{AB|CD}$, $M_{AC|BD}$, $M_{AD|BC}$ and $M_{ABCD}$.
For higher $q$, the spanning M\"obius vectors follow straightforwardly from the explicit formula for the M\"obius
function on $\Pi_X$.

\subsubsection{Multipartite entanglement of purification}
Finally, take the seed LU-invariant to be the multipartite entanglement of purification \cite{Umemoto:2018jpc},
\[
f_4=E_p^{(q)}:=\min\Big(\sum_{a=1}^{ q} S_{A_a\tilde A_a}\Big),
\]
defined for a $q$-partite mixed state $\rho_{A_1,\ldots,A_{q}}$. One purifies $\rho$ to
$|\psi\rangle_{A_1,\ldots,A_{q},\tilde A_1,\ldots,\tilde A_{q}}$ and minimizes over all purifications.
For pure states, $E_p^{(q)}=\sum_{a=1}^q S_{A_a}$ and hence it reduces to the first example with $n=1$. For mixed states, a
linear combination of $E_p^{(q)}$ can be formed that vanishes whenever the density matrix is layerwise-factorized. For
mixed states, the term separable is often reserved for density matrices that are convex combinations of product
states. The signal constructed here vanishes only on strictly factorized density matrices.

Since $E_p^{(1)}=0$, the explicit form of the resulting signal mimics that of the R\'enyi multi-entropy example above.
Below we record only the coarsest M\"obius vector $M_{\mathbf 1}[f_4]$ for $q=3$ and $q=4$:
\begin{itemize}
\item $q=3$: $s=-(E^{(2)}_p(AB,C)+\ldots) +2\,E^{(3)}_p(A,B,C).$
\item $q=4$: $s=-(E^{(2)}_p(ABC,D)+\ldots) \\\\-(E^{(2)}_p(AB,CD)+\ldots)
+2(E^{(3)}_p(AB,C,D)+\ldots)-6E^{(4)}_p(A,B,C,D)$.
\end{itemize}
Up to an overall sign, these quantities coincide with the multipartite correlation signals $\Delta_p^{(q)}$
introduced in \cite{Bao:2025psl}. Note that in general the signal space is spanned by $M_\rho[f_4]$ with $\rho\in\mathcal N$
and not only by $M_{\mathbf 1}[f_4]$.

\section{Non-symmetric Signals from multi-invariants}
In this subsection, we will construct a class of non-symmetric additive signals from a general $q$-partite multi-invariant. 
Let us start with the definition of multi-invariant.
Let us write the $q$-partite state $|\psi\rangle$ in an arbitrarily chosen but factorized basis
\begin{align}
    |\psi\rangle=\sum_{i_1, \ldots, i_q} \psi_{i_1, \ldots, i_q} |i_1\rangle\otimes \ldots \otimes |i_q\rangle
\end{align}
Here $|i_a\rangle$ labels the basis of the Hilbert space factor ${\cal H}_a$. Multi-invariants are homogeneous polynomials of $\psi_{i_1, \ldots, i_q}$ and its complex conjugate that are multiplicative \cite{Gadde:2024taa}.
Local unitary invariant functions of a
multi-partite quantum states have been explored widely.
See \cite{Vrana_2011, Vrana_2012, Szalay_2012, hero2011measure, HERO20096508, Rains2000PolynomialInvariantsQuantumCodes, BrylinskiBrylinski2002InvariantPolyQuditsChapter, GrasslRoettelerBeth1998ComputingLocalInvariants} for their enumeration and construction. Multi-invariants are constructed by taking a fixed number, say $n$, of $\psi$'s and the same number of $\bar \psi$'s and contracting all the indices. 
The index contraction pattern can be described by an edge-colored graph. See \cite{Gadde:2022cqi, Gadde:2023zni, Gadde:2024jfi, Gadde:2025ybn} for some examples of such graphs. It is also specified by an $n$-tuple $(\sigma_1,\ldots, \sigma_q)$ of permutation elements of $S_n$. A multi-invariant can then be compactly written as
\begin{align}\label{Z_def}
    {\cal Z}(\sigma_1, \ldots, \sigma_q;|\psi\rangle):=\langle \psi|^{\otimes n} \,\sigma_1\otimes \ldots\otimes \sigma_q\,|\psi\rangle^{\otimes n}.
\end{align}
We have introduced the subscript of permutation elements to indicate the type of multi-invariant. From this definition, it is clear that it is multiplicative under layer decomposition. 
The labeling by permutation tuple contains a redundancy. It is due to the freedom in relabeling bra's and ket's. It gives the relation,
\begin{align}
    {\cal Z}(\sigma_1, \ldots, \sigma_q;\cdot)&={\cal Z}(g\cdot\sigma_1, \ldots, g\cdot\sigma_q;\cdot)\notag\\
    &={\cal Z}(\sigma_1\cdot h, \ldots, \sigma_q\cdot h;\cdot), \qquad g,h\in S_n\notag
\end{align}
A general multi-invariant need not be positive or even real, but using Cauchy-Schwarz inequality, we see from equation \eqref{Z_def}  that $|{\cal Z}(\cdot;\cdot)| \leq 1$.  It is however possible to construct multi-invariants that are real and positive for all states. Its edge-colored graph has the property that it is reflection symmetric which allows one to interpret the multi-invariant as a norm of a composite tensor. 

For positive multi-invariants, we define the logarithmic multi-invariants ${\cal E}(\cdot,\cdot)\equiv -\frac{1}{n}\log {\cal Z} (\cdot,\cdot)$.
A straightforward consequence of the multiplicativity of ${\cal Z}$'s is the additivity of ${\cal E}$.
We use this additivity in conjunction with Lemma~\ref{key_lemma}  to construct multi-partite signals.
\begin{theorem}
    If $T$ is an $s$-dimensional rank $q$ tensor whose entries are in ${\mathbb C}$ with the property $\sum_{i_k}T_{i_1, \ldots, i_q}=0,\forall k$ with other indices kept fixed then the following is a signal
    \begin{align}\label{signal_construct}
        \sum_{i_i,\ldots, i_q} T_{i_1, \ldots, i_q}{\cal E}(\sigma_{i_1}, \ldots, \sigma_{i_q})
    \end{align}
    Here  $\sigma_i\in \Sigma\subseteq S_n$ and $|\Sigma|=s$.
\end{theorem}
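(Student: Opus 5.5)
The plan is to follow the strategy already used for symmetric signals: the quantity in \eqref{signal_construct} is a linear combination of logarithmic multi-invariants, and each ${\cal E}$ is additive under layer decomposition. By Lemma~\ref{key_lemma}, it therefore suffices to show that \eqref{signal_construct} vanishes on every $\pi$-separable state of type $(1,q-1)$.

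Fix such a state, say $|\psi\rangle=|\psi_{A_k}\rangle\otimes|\psi_{\bar k}\rangle$, where $A_k$ is the singleton block and $\bar k$ is the complement. The key step is to evaluate ${\cal Z}(\sigma_{i_1},\ldots,\sigma_{i_q};|\psi\rangle)$ on this product state using \eqref{Z_def}. Since $|\psi\rangle^{\otimes n}=|\psi_{A_k}\rangle^{\otimes n}\otimes|\psi_{\bar k}\rangle^{\otimes n}$ and the operator $\sigma_{i_1}\otimes\cdots\otimes\sigma_{i_q}$ acts factor by factor, the multi-invariant factorizes:
\begin{align}
{\cal Z}(\sigma_{i_1},\ldots,\sigma_{i_q};|\psi\rangle)=\langle\psi_{A_k}|^{\otimes n}\sigma_{i_k}|\psi_{A_k}\rangle^{\otimes n}\;{\cal Z}_{\bar k}\bigl(\{\sigma_{i_a}\}_{a\neq k};|\psi_{\bar k}\rangle\bigr).\notag
\end{align}
The first factor equals $1$ because $|\psi_{A_k}\rangle^{\otimes n}$ is invariant under permutations of the copies. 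Hence ${\cal E}(\sigma_{i_1},\ldots,\sigma_{i_q};|\psi\rangle)$ is independent of $i_k$; write it as $g(i_1,\ldots,\widehat{i_k},\ldots,i_q)$.

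The sum in \eqref{signal_construct} then becomes $\sum_{i_a,a\neq k} g(\ldots)\sum_{i_k}T_{i_1,\ldots,i_q}$. The inner sum vanishes by the hypothesis on $T$, so the signal vanishes on every state of type $(1,q-1)$. Since $k$ was arbitrary, Lemma~\ref{key_lemma} then shows that \eqref{signal_construct} vanishes on all layerwise-separable states, so it is a signal.

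The main point needing care is well-definedness and additivity of ${\cal E}$. The logarithm requires ${\cal Z}>0$. We assume this, as the paper does for positive multi-invariants, with the convention that the relevant tuples are reflection-symmetric. If positivity fails, we can work with a fixed branch of the logarithm, or with $\log|{\cal Z}|$, since additivity only needs multiplicativity of ${\cal Z}$. Otherwise the argument is a direct computation; the only substantive observation is that a factorized party contributes trivially to the multi-invariant.
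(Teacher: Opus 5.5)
Your proof is correct and is essentially the paper's argument: additivity of ${\cal E}$ together with Lemma~\ref{key_lemma} reduces everything to $(1,q-1)$-separable states. There the factorized party contributes a factor $1$ to ${\cal Z}$, so ${\cal E}$ does not depend on $i_k$, and the vanishing marginal sums of $T$ kill the expression. Summing over $i_k$ directly is a bit cleaner than the paper's route through formal combinations $\alpha^{(1)}_{j_1}\cdots\alpha^{(q)}_{j_q}$ of zero-sum vectors, which tacitly needs every such $T$ to decompose that way. Do drop the ``fixed branch'' fallback: a fixed branch of $\log$ is additive only modulo $2\pi i$, whereas $\log|{\cal Z}|$ is genuinely additive.
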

\begin{proof}
If party $A_1$ factorizes then any $q$-partite logarithmic multi-invariant obeys,
\begin{align}\label{vanishing1}
    {\cal E}(\sigma_i, \sigma_2, \ldots;\cdot)={\cal E}(\sigma_j, \sigma_2, \ldots;\cdot) \qquad \forall \sigma_i,\sigma_j\in S_n.
\end{align}
To make full use of this condition, we promote permutation elements in the argument of ${\cal E}$ to a formal linear combination permutation elements and define
\begin{align}
    {\cal E}(\sum_j \alpha_j\sigma_j,\ldots;\cdot )=\sum_j \alpha_j{\cal E}(\sigma_j,\ldots;\cdot ).
\end{align}
This definition is also extended to other permutation arguments of ${\cal E}$. Then equation \eqref{vanishing1} can be generalized to
\begin{align}
    {\cal E}(\sum_j \alpha_j\sigma_j,\ldots;\cdot )=0,\qquad {\rm when}\quad \sum_j \alpha_j=0.
\end{align}
where the rest of the arguments of ${\cal E}$ can also be formal linear combinations of permutations. The vanishing of $f$ for states where any other party is factorized is solved in the same way. The following linear combination of logarithmic multi-invariants is a signal
\begin{align}
    {\cal E}(\sum_j \alpha_j^{(1)}\sigma_j,\ldots, \sum_j \alpha_j^{(q)}\sigma_j;\cdot),\qquad {\rm where} \sum_j \alpha_j^{(i)}=0, \forall i.\notag
\end{align}
Of course, linear combinations of such logarithmic multi-invariants are also signals:
\begin{align}
    \sum_{j_1,\ldots, j_q} (\alpha_{j_{1}}^{(1)}\ldots \alpha_{j_{q}}^{(q)}+\beta_{j_{1}}^{(1)}\ldots \beta_{j_{q}}^{(q)}+\ldots)\,\, {\cal E}(\sigma_{j_1},\ldots \sigma_{j_q};\cdot)\notag
\end{align}
where $\alpha's, \beta's, \ldots$ etc obeys $\sum_j \alpha_j^{(i)}=0$ for all $j$. The coefficient of ${\cal E}$ is precisely the tensor $T$ appearing in the statement of the theorem.
\end{proof}
\noindent
As we can see, there is a large ambiguity in the signals constructed from a general positive multi-invariant. For a general positive multi-invariant, a signal that contains the least number of terms can be chosen to be
\begin{align}
    {\cal E}(\sigma_1-\sigma_2,\sigma_2-\sigma_1, \sigma_3-\sigma_1,\ldots, \sigma_q-\sigma_1;\cdot).
\end{align}

\section{Conclusion and outlook}

We introduced a general route to genuinely multipartite entanglement signals from symmetric families of lower-partite local-unitary invariants. The central observation is that the partition lattice provides the natural bookkeeping device for multipartite separability, and that M\"obius inversion on this lattice isolates combinations that vanish on all nontrivially separable states. This yields a simple universal mechanism for turning compatible families of local-unitary invariant quantities into genuinely multipartite signals. Such compatible families can in turn be constructed naturally from additive local-unitary invariants. We also showed how to extract a genuinely multipartite entanglement signal from multi-invariants that are not necessarily symmetric. Signals of this kind have recently been applied to ground-state wavefunctions of gapped quantum systems in the study of topological phases \cite{DelZotto:2026fpw}.

Several directions remain open. It would be interesting to explore if the pre-signals constructed here can be used to make GME measures. This would amount to searching for pre-signals that are non-negative and monotonically non-decreasing under LOCC. We expect that the graph based technology, in particular the notion of edge-convexity discovered in \cite{Gadde:2024jfi,Gadde:2024taa} to be useful in this search. The GME measures would be particularly useful if they can be computed for mixed states. If pre-signals are constructed from multi-invariants, which are polynomials in the state coefficients, then the extension to mixed states via convex roof may be more tractable. 

The condition of compatibility is natural, as is evident from the fact that such families can be constructed from additive LU-invariants, but its explicit form is cumbersome. It would be nice to understand this condition structurally to identify all LU-invariant families that are compatible. 

Lastly, Ref.~\cite{Ju:2026xfh} proposes a hypergraph-based organization of multipartite entanglement structures and derives linear relations among partition-labeled measures that detect entanglement patterns. In that framework, local Hilbert spaces are refined into tensor factors, resource states are placed on hyperedges. From our viewpoint this defines a subclass of layer-decomposable states. On the other hand, we work directly with the full partition lattice and compatible families of partition-sensitive local-unitary invariants, and use Möbius inversion to construct genuinely multipartite signals. It would be interesting to relate the signal spaces of the two papers and to understand when the two constructions coincide.

\begin{acknowledgments}
We thank Sriram Akella, Shraiyance Jain and Pratik Rath for useful discussions. This work is supported by the Department of Atomic Energy, Government of India, under Project Identification No. RTI 4002, and the Infosys Endowment for the study of the Quantum Structure of Spacetime.
\end{acknowledgments}

\appendix

\section{Proof of Theorem 1}\label{no_signal}
\noindent {\bf Theorem 1}. \emph{No signal can serve as a genuine multipartite entanglement (GME) measure.}
\begin{proof}

    Consider the following state that is layerwise-separable.
    \begin{align}
        |\psi\rangle=|\psi^{(1)}\rangle\otimes |\psi^{(2)}\rangle
    \end{align}
    where $|\psi^{(1)}\rangle=|{\rm Bell}\rangle_{A_1B_1}\otimes |0\rangle_{C_1}$ and $|\psi^{(2)}\rangle=|0\rangle_{A_2}\otimes |{\rm Bell}\rangle_{B_2 C_2}$. 
    Here we have used the labels $A,B$ and $C$ for the three parties. Each of the parties is further factorized into $A=A_1\otimes A_2$ and so on. The Bell state is defined as
    \begin{align}
        |{\rm Bell}\rangle_{A_1B_1}=\frac{1}{\sqrt 2}(|00\rangle_{A_1B_1}+|11\rangle_{A_1B_1}).
    \end{align}
    Consider the following local operation on party $B$,
    \begin{align}
        E_1&=(|00\rangle\langle00|+|11\rangle\langle11|)_{B_1 B_2}\notag\\
        E_2&=(|01\rangle\langle01|+|10\rangle\langle10|)_{B_1 B_2}
    \end{align}
    It is easy to check that the set $\{E_1,E_2\}$ is trace preserving. After application of these Krauss operators, we get
    \begin{align}
        |\psi\rangle_1&=\frac{1}{\sqrt 2}(|0\rangle_{A_1}|00\rangle_{B_1B_2}|0\rangle_{C_1}+|1\rangle_{A_1}|11\rangle_{B_1B_2}|1\rangle_{C_1})\notag\\
        |\psi\rangle_2&=\frac{1}{\sqrt 2}(|0\rangle_{A_1}|01\rangle_{B_1B_2}|1\rangle_{C_1}+|1\rangle_{A_1}|10\rangle_{B_1B_2}|0\rangle_{C_1}).\notag
    \end{align}
    Neither of these is a layerwise-separable state. In fact both of them are LU-equivalent to a GHZ state. It is manifest for $|\psi\rangle_1$, for $|\psi_2\rangle$ map $|01\rangle_{B_1B_2}\to |0\rangle_{B}$ and $|1\rangle_C\to|0\rangle_C$.
    We expect that a GME measure  be strictly positive on both of them. This is in conflict with the property that a GME measure should be monotonically non-increasing under LOCC on average. 
\end{proof}

\section{Proof of Lemma 2}\label{lemma_proof}
\noindent {\bf Lemma 2}. If $f$ is a symmetric and additive $q$-partite LU-invariant then $\{f|_m\}, m=1,\ldots, q$ is a compatible family.
\begin{proof}
Denote $f_\pi := f|_{|\pi|,\pi}$.  We show that for every $\kappa$-separable state
\[
|\psi\rangle_\kappa
  = \bigotimes_{j=1}^{m}|\psi_{B_j}\rangle,
\qquad
|\psi_{B_j}\rangle\in\bigotimes_{A_a\in B_j}\mathcal{H}_{A_a},
\]
we have $f_\pi(|\psi\rangle_\kappa)=f_{\pi\wedge\kappa}(|\psi\rangle_\kappa)$.

\medskip\noindent\textit{Step 1: Layer decomposition of $\mathrm{grp}_\pi(|\psi\rangle_\kappa)$.}
For each block $C_i\in\pi$, factor the coarse-grained Hilbert space as
\[
\mathcal{H}_{C_i}
  = \bigotimes_{A_a\in C_i}H_{A_a}
  = \bigotimes_{j=1}^{m}\mathcal{H}_{C_i\cap B_j}^{(j)},
\]
where $\mathcal{H}_{C_i\cap B_j}^{(j)}:=\bigotimes_{A_a\in C_i\cap B_j}H_{A_a}$, equal to $\mathbb{C}$
(with state $|0\rangle$) when $C_i\cap B_j=\emptyset$.  Reordering the tensor factors gives the identification
\[
\bigotimes_{i}\mathcal{H}_{C_i}
  = \bigotimes_{j}\bigotimes_{i}\mathcal{H}_{C_i\cap B_j}^{(j)}
  = \bigotimes_{j}\mathcal{H}_{B_j},
\]
where the last equality uses $\bigsqcup_i(C_i\cap B_j)=B_j$.  Under this identification
$|\psi\rangle_\kappa=\bigotimes_j|\psi_{B_j}\rangle$ is already written as a tensor product of $m$ layers, so $\mathrm{grp}_\pi(|\psi\rangle_\kappa)$ admits the layer decomposition
\[
\mathrm{grp}_\pi(|\psi\rangle_\kappa)
  = \bigotimes_{j=1}^{m}|\psi\rangle_\pi^{(j)},
\]
where the $j$-th layer $|\psi\rangle_\pi^{(j)}$ is the $|\pi|$-partite state on
$\bigotimes_i\mathcal{H}_{C_i\cap B_j}^{(j)}$
consisting of $|\psi_{B_j}\rangle$ on the $r_j$ parties with $C_i\cap B_j\neq\emptyset$
and $|0\rangle$ on the remaining $|\pi|-r_j$ parties (which have
$\mathcal{H}_{C_i\cap B_j}^{(j)}=\mathbb{C}$),
with $r_j:=|\{C_i\in\pi:C_i\cap B_j\neq\emptyset\}|$.

\medskip\noindent\textit{Step 2: Reducing each layer to $f|_{r_j}$.}
Since $f|_{|\pi|}$ is additive (as the restriction of an additive $f$), the layer
decomposition gives
\[
f_\pi(|\psi\rangle_\kappa)
  = f|_{|\pi|}\!\left(\mathrm{grp}_\pi(|\psi\rangle_\kappa)\right)
  = \sum_{j=1}^{m}f|_{|\pi|}\!\left(|\psi\rangle_\pi^{(j)}\right).
\]
For each $j$, the $|\pi|-r_j$ parties of $|\psi\rangle_\pi^{(j)}$ with
$\mathcal{H}_{C_i\cap B_j}^{(j)}=\mathbb{C}$ are in state $|0\rangle$.
After embedding into the common $d$-dimensional space (as in Definition~4),
these are separate parties of the $|\pi|$-partite state in state $|0\rangle\in\mathbb{C}^d$.
Using the symmetry of $f$ to reorder parties and then Definition~5,
\begin{align}
&f|_{|\pi|}\!\left(|\psi\rangle_\pi^{(j)}\right)\\
  &= f|_{r_j}\!\left(|\psi_{B_j}\rangle
    \text{ on the }r_j\text{ parties }
    \{C_i\cap B_j:C_i\cap B_j\neq\emptyset\}\right).\notag
\end{align}

\medskip\noindent\textit{Step 3: Identical reduction for $f_{\pi\wedge\kappa}$.}
The blocks of $\pi\wedge\kappa$ are $\{C_i\cap B_j:C_i\cap B_j\neq\emptyset\}$.
Since $|\psi\rangle_\kappa$ is $\kappa$-separable,
$\mathrm{grp}_{\pi\wedge\kappa}(|\psi\rangle_\kappa)$ is a product across the $m$
groups of parties $\{C_i\cap B_j\}_i$ (one group per block $B_j$).
The same argument—additivity of $f|_{|\pi\wedge\kappa|}$ followed by stripping the
reference-state parties $\{C_i\cap B_l:l\neq j\}$, which are separate parties
of the $|\pi\wedge\kappa|$-partite state in state $|0\rangle$, via symmetry and
Definition~5—yields
\[
f_{\pi\wedge\kappa}(|\psi\rangle_\kappa)
  = \sum_{j=1}^{m}f|_{r_j}\!\left(|\psi_{B_j}\rangle
    \text{ on }\{C_i\cap B_j:C_i\cap B_j\neq\emptyset\}\right).
\]
This coincides term-by-term with the sum in Step~2, so
$f_\pi(|\psi\rangle_\kappa)=f_{\pi\wedge\kappa}(|\psi\rangle_\kappa)$.
Hence $\{f|_m\}_{m=1}^q$ is a compatible family.
\end{proof}

\section{Useful lemmas}\label{useful_lemmas}
For the discussion in this section, it is convenient to let $\mathbb C[\Pi_X]$ be the free complex vector space with basis $\Pi_X$, and extend $\wedge, \vee$ bilinearly to $\mathbb C[\Pi_X]\times \mathbb C[\Pi_X]$.

\begin{remark}
    Let $F\in {\mathbb C}[\Pi_X]$ and $\kappa, \rho\in \Pi_X$ such that $\rho\leq \kappa$. If $F\wedge \kappa=0$, then $F\wedge \rho=0$.
\end{remark}
\begin{proof}
Using associativity of $\wedge$ in the lattice and bilinearity,
\[
(F\wedge \kappa)\wedge \lambda \;=\; F\wedge (\kappa\wedge \lambda).
\]
If $\lambda\le \kappa$, then $\kappa\wedge \lambda=\lambda$, hence
\[
(F\wedge \kappa)\wedge \lambda \;=\; F\wedge \lambda.
\]
If $F\wedge \kappa=0$, then the left-hand side equals $0\wedge\lambda=0$, so $F\wedge \lambda=0$.
\end{proof}
\begin{remark}\label{downset_closure}
    Let $\mathcal K\subseteq \Pi_X$ and define the solution space
\[
\mathcal V(\mathcal K)\;:=\;\{F\in\mathbb C[\Pi_X]:\ F\wedge \kappa=0\ \ \forall\,\kappa\in\mathcal K\}.
\]
Then
\[
\mathcal V(\mathcal K)\;=\;\mathcal V(\downarrow \mathcal K).
\]
The downset $\downarrow \mathcal K$ of $\cal K$ is the closure of $\cal K$ under refinements i.e. 
\[
\downarrow \mathcal K \;:=\;\{\lambda\in \Pi_X:\exists\,\kappa\in\mathcal K\text{ with }\lambda\le \kappa\}.
\]
\end{remark}
\noindent 
As a consequence of Remark~\ref{downset_closure}, without loss of generality we take ${\cal K}$ to be a downset in the following. 

\begin{lemma}
    The following two sets are equal:
    \begin{align}
        \{\pi \leq \rho, \pi \wedge \kappa= \tau\}=[\tau,\rho]\setminus \bigcup_{m\in M}[m, \rho]
    \end{align}
    where 
    \begin{align}
        M=\{m:\tau<m\leq  \rho\wedge \kappa, (\tau, m)=\varnothing\}
    \end{align}
\end{lemma}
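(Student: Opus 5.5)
We argue by double inclusion, using only the lattice axioms of $\Pi_X$ and the fact that $\Pi_X$ is finite. First we fix the implicit hypothesis $\tau\le \rho\wedge\kappa$. This is the only case in which the statement is used: $\tau$ arises as $\pi\wedge\kappa$ for some $\pi\le\rho$, so $\tau\le\pi\le\rho$ and $\tau\le\kappa$. Without this hypothesis the left-hand side is empty, while the right-hand side need not be. Throughout, $(\tau,m)=\varnothing$ means that $m$ covers $\tau$.

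\emph{Inclusion $\subseteq$.} Let $\pi\le\rho$ with $\pi\wedge\kappa=\tau$. Since $\tau=\pi\wedge\kappa\le\pi$, we get $\pi\in[\tau,\rho]$. Now suppose $\pi\in[m,\rho]$ for some $m\in M$. Because $m\le\rho\wedge\kappa\le\kappa$ and $m\le\pi$, we have $m\le\pi\wedge\kappa=\tau$. This contradicts $\tau<m$. Hence $\pi$ lies in no $[m,\rho]$ with $m\in M$.

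\emph{Inclusion $\supseteq$.} Let $\pi\in[\tau,\rho]$ with $\pi\notin[m,\rho]$ for every $m\in M$. Clearly $\pi\le\rho$. Set $\sigma:=\pi\wedge\kappa$. Monotonicity of the meet gives $\sigma\ge\tau\wedge\kappa=\tau$, using $\tau\le\kappa$. Suppose for contradiction that $\sigma\neq\tau$, so $\tau<\sigma$. Since $\sigma\le\pi\le\rho$ and $\sigma\le\kappa$, we have $\tau<\sigma\le\rho\wedge\kappa$. The interval $(\tau,\sigma]$ is then a nonempty finite poset, so it has a minimal element $m$. This $m$ covers $\tau$, i.e.\ $(\tau,m)=\varnothing$, and satisfies $\tau<m\le\sigma\le\rho\wedge\kappa$. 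Therefore $m\in M$. But $m\le\sigma\le\pi\le\rho$ gives $\pi\in[m,\rho]$, a contradiction. Hence $\pi\wedge\kappa=\tau$.

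No step is a real obstacle. The two points needing care are making the hypothesis $\tau\le\rho\wedge\kappa$ explicit, and invoking finiteness of $\Pi_X$ to produce a covering element $m$ of $\tau$ below $\sigma$. This is what lets one pass from "$\pi\wedge\kappa$ strictly exceeds $\tau$" to "$\pi$ lies above some element of $M$".
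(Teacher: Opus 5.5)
Your proof is correct and follows the paper's argument essentially step for step. For $\subseteq$ you use the same contradiction $m\le\pi\wedge\kappa=\tau$, and for $\supseteq$ you use the same choice of an element minimal above $\tau$ below $\pi\wedge\kappa$. Making the hypothesis $\tau\le\rho\wedge\kappa$ explicit is a genuine improvement: the paper uses $\tau\le\kappa$ silently in the $\supseteq$ direction, and the hypothesis holds in the subsequent lemma, where $\tau\le\rho\wedge\kappa$ is fixed.
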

\begin{proof}
We show both inclusions.

\smallskip

\noindent
``$\subseteq$'': Let $\pi\le \rho$ satisfy $\pi\wedge\kappa=\tau$. Since $\tau=\pi\wedge\kappa\le \pi$, we have $\tau\le \pi\le \rho$, hence $\pi\in[\tau,\rho]$. We claim that $\pi\notin [m,\rho]$ for any $m\in M$. Indeed, if $\pi\ge m$ for some $m\in M$, then since also $m\le \kappa$, we would have
\[
m\le \pi\wedge\kappa=\tau,
\]
contradicting $m>\tau$. Hence
\[
\pi\in [\tau,\rho]\setminus \bigcup_{m\in M}[m,\rho].
\]

\smallskip

\noindent
``$\supseteq$'': Let $\pi\in [\tau,\rho]\setminus \bigcup_{m\in M}[m,\rho]$. Since $\tau\le \pi$ and $\tau\le \kappa$, we have
\[
\tau\le \pi\wedge\kappa\le \rho\wedge\kappa.
\]
Suppose $\pi\wedge\kappa>\tau$. Then, because the interval $[\tau,\pi\wedge\kappa]$ is finite, there exists an element
\[
m\in [\tau,\pi\wedge\kappa]
\]
which is minimal subject to $m>\tau$ i.e. $(\tau,m)=\varnothing$, and since
\[
m\le \pi\wedge\kappa\le \rho\wedge\kappa,
\]
we have $m\in M$. Also $m\le \pi$, so $\pi\in [m,\rho]$, contradicting the assumption that
\[
\pi\notin \bigcup_{m\in M}[m,\rho].
\]
Therefore $\pi\wedge\kappa=\tau$.
\end{proof}
Let us get a feeling for this lemma by working out an example. 
Take $|X|=4$, let
\[
\rho = 1234,\qquad \kappa = 12|34,\qquad \tau = 1|2|3|4.
\]
Then
\[
\rho\wedge\kappa = 12|34,
\]
so the interval $[\tau,\rho\wedge\kappa]$ consists of
\[
1|2|3|4,\qquad 12|3|4,\qquad 1|2|34,\qquad 12|34.
\]
Hence the minimal elements above $\tau$ in this interval are
\[
M=\{\,12|3|4,\;1|2|34\,\}.
\]

The lemma says that
\[
\{\pi\le \rho:\ \pi\wedge\kappa=\tau\}
=
[\tau,\rho]\setminus\bigl([\,12|3|4,\rho]\cup[\,1|2|34,\rho]\bigr).
\]
Now $[\tau,\rho]$ is the whole partition lattice on $\{1,2,3,4\}$. Removing the partitions above $12|3|4$ excludes all partitions in which $1$ and $2$ lie in the same block; removing those above $1|2|34$ excludes all partitions in which $3$ and $4$ lie in the same block. Thus the surviving partitions are exactly
\[
1|2|3|4,\,\,\,
13|2|4,\,\,\,
14|2|3,\,\,\,
23|1|4,\,\,\,
24|1|3,\,\,\,
13|24,\,\,\,
14|23.
\]

We check directly that each of these has meet $\tau$ with $\kappa$. For instance,
\[
(13|24)\wedge(12|34)=1|2|3|4,
\]
since the block $13$ is split by $12|34$ into $\{1\}$ and $\{3\}$, and similarly $24$ is split into $\{2\}$ and $\{4\}$. On the other hand,
\[
(12|3|4)\wedge(12|34)=12|3|4>\tau,
\]
so $12|3|4$ is excluded, exactly as the lemma predicts.

\begin{lemma}\label{thm:kernel-meet-downset-correct}
Let $\mathcal K\subseteq \Pi_X$ be a down-set. Define
\begin{align}
    \mathcal V(\mathcal K)&:=\{F\in\mathbb C[\Pi_X]:\ F\wedge \kappa=0\ \ \forall\,\kappa\in\mathcal K\}\\
    M_\rho &:= \sum_{\pi\le \rho} \mu(\pi,\rho)\,\pi\ \in\ \mathbb C[\Pi_X].
\end{align}
Here $\mu(\pi, \rho)$ is the M\"obius function on the partition lattice $\Pi_X$.
Then 
\begin{enumerate}
    \item Vectors $\{M_\rho:\rho\in\Pi_X\}$ are linearly independent. 
    \item $\mathcal V(\mathcal K)\;=\;\mathrm{span}\{M_\rho:\rho\notin\mathcal K\}$
\end{enumerate}
\end{lemma}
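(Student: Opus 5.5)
Part 1 follows from triangularity. By the definition of the M\"obius function we have $\mu(\rho,\rho)=1$, so
\[
M_\rho=\rho+\sum_{\pi<\rho}\mu(\pi,\rho)\,\pi .
\]
Fix any linear extension of the partial order on $\Pi_X$. In that ordering, the transition matrix from the basis $\Pi_X$ to $\{M_\rho\}$ is unitriangular, hence invertible. So the $M_\rho$ are linearly independent, and in fact form a basis of $\mathbb C[\Pi_X]$. Equivalently, M\"obius inversion gives $\rho=\sum_{\pi\le\rho}M_\pi$, since $\sum_{\pi\le\rho}M_\pi=\sum_{\sigma}\sigma\sum_{\sigma\le\pi\le\rho}\mu(\sigma,\pi)=\rho$.

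For part 2, the key computation is the meet of a single $M_\rho$ with a fixed $\kappa$. Using $\rho=\sum_{\pi\le\rho}M_\pi$, I would show the identity
\[
M_\rho\wedge\kappa=\begin{cases}M_\rho,&\rho\le\kappa,\\ 0,&\rho\not\le\kappa.\end{cases}
\]
The cleanest route is to compute $\sigma\wedge\kappa$ for a basis element $\sigma$ in the $M$-basis. We have
\[
\sigma\wedge\kappa=\sum_{\pi\le\sigma\wedge\kappa}M_\pi=\sum_{\pi\le\sigma,\ \pi\le\kappa}M_\pi .
\]
Hence the linear map $F\mapsto F\wedge\kappa$ acts on $\sum_\pi c'_\pi M_\pi$ by deleting every term with $\pi\not\le\kappa$. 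To check this, write $F=\sum_\sigma c_\sigma\sigma$, so that $c'_\pi=\sum_{\sigma\ge\pi}c_\sigma$. Then
\[
F\wedge\kappa=\sum_\sigma c_\sigma\sum_{\pi\le\sigma\wedge\kappa}M_\pi=\sum_{\pi\le\kappa}M_\pi\sum_{\sigma\ge\pi}c_\sigma=\sum_{\pi\le\kappa}c'_\pi M_\pi .
\]
So $\wedge\kappa$ is the coordinate projection onto $\mathrm{span}\{M_\pi:\pi\le\kappa\}$ in the $M$-basis. This projection identity is the crux and the only real computation. The earlier set-equality lemma, $\{\pi\le\rho:\pi\wedge\kappa=\tau\}=\dots$, could alternatively be used to evaluate $\sum_{\pi\le\rho,\ \pi\wedge\kappa=\tau}\mu(\pi,\rho)$ directly via inclusion--exclusion over the intervals $[m,\rho]$. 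That route seems messier, so I would use the basis-change argument above and keep the lemma only as a cross-check.

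With the projection identity in hand, $F\wedge\kappa=0$ for all $\kappa\in\mathcal K$ holds exactly when $c'_\pi=0$ for every $\pi$ with $\pi\le\kappa$ for some $\kappa\in\mathcal K$. That is, $c'_\pi=0$ for every $\pi\in\downarrow\mathcal K=\mathcal K$, using that $\mathcal K$ is a down-set (and Remark~\ref{downset_closure} in general). Since the $M_\pi$ form a basis by part 1, this gives $\mathcal V(\mathcal K)=\mathrm{span}\{M_\rho:\rho\notin\mathcal K\}$, and the two inclusions follow simultaneously.
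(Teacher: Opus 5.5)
Your proof is correct. Part 1 is the same unitriangularity argument as the paper, but for part 2 you take a genuinely different route.

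The paper stays in the original basis of partitions. It fixes $\tau\le\rho\wedge\kappa$ and describes the fiber $\{\pi\le\rho:\pi\wedge\kappa=\tau\}$ as $[\tau,\rho]$ with the intervals $[m,\rho]$ removed, $m$ ranging over the covers of $\tau$ in $[\tau,\rho\wedge\kappa]$ (the preceding set-equality lemma). It then kills every term of the resulting inclusion--exclusion sum using $\sum_{\pi\in[x,\rho]}\mu(\pi,\rho)=0$ for $x<\rho$. This gives $(*)$, and the reverse inclusion needs a separate argument that picks a maximal $\rho_0\in\mathcal K$ with nonzero coefficient.

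You instead invert first, writing $\sigma=\sum_{\pi\le\sigma}M_\pi$. Then you use only the defining property of the meet ($\pi\le\sigma\wedge\kappa$ if and only if $\pi\le\sigma$ and $\pi\le\kappa$) to see that $F\mapsto F\wedge\kappa$ is diagonal in the $M$-basis. It is the coordinate projection onto $\mathrm{span}\{M_\pi:\pi\le\kappa\}$. This has several advantages:
\begin{itemize}
\item It yields $(*)$ immediately (take $F=M_\rho$) and gives both inclusions at once.
\item It makes the set-equality lemma and the inclusion--exclusion unnecessary.
\item It shows the structure behind the result: together with M\"obius inversion it gives $M_\rho\wedge M_\sigma=\delta_{\rho\sigma}M_\rho$, i.e.\ the $M_\rho$ are the orthogonal idempotents of the M\"obius algebra.
\item The argument works verbatim for any finite meet-semilattice.
\end{itemize}
What the paper's route offers in exchange is an explicit combinatorial description of the fibers of $\pi\mapsto\pi\wedge\kappa$, which the statement itself does not need.
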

\begin{proof}
\noindent
\text{(1) Linear independence.}
Suppose
\[
\sum_{\rho} a_\rho M_\rho=0.
\]
Since
\[
M_\rho=\rho+\sum_{\pi<\rho}\mu(\pi,\rho)\,\pi,
\]
the change-of-basis matrix from $\{M_\rho\}$ to the standard basis $\{\rho\}$ is unitriangular with respect to any linear extension of the refinement order. Therefore it is invertible, and the family $\{M_\rho\}$ is linearly independent.

\medskip

\noindent
\text{(2) Description of $\mathcal V(\mathcal K)$.}

For $\rho\in\Pi_X$ and $\kappa\in\Pi_X$, write
\[
M_\rho\wedge \kappa
=
\sum_{\pi\le \rho}\mu(\pi,\rho)\,(\pi\wedge \kappa).
\]
Fix $\tau\le \rho\wedge \kappa$. The coefficient of $\tau$ in $M_\rho\wedge\kappa$ is
\[
\sum_{\substack{\pi\le \rho\\ \pi\wedge\kappa=\tau}}\mu(\pi,\rho).
\]
By the previous lemma,
\[
\{\pi\le \rho:\ \pi\wedge\kappa=\tau\}
=
[\tau,\rho]\setminus \bigcup_{m\in M}[m,\rho],
\]
where
$
M=\{m:\tau<m\le \rho\wedge\kappa,\ (\tau,m)=\varnothing\}.
$
Applying inclusion--exclusion, this coefficient becomes
\[
\sum_{J\subseteq M}(-1)^{|J|}
\sum_{\pi\in [x_J,\rho]}\mu(\pi,\rho),
\qquad
x_J:=\bigvee(J\cup\{\tau\}).
\]
If $\rho\nleq \kappa$, then $\rho\wedge\kappa<\rho$, hence $x_J\le \rho\wedge\kappa<\rho$ for every $J$, so by the defining property of the M\"obius function,
\[
\sum_{\pi\in [x_J,\rho]}\mu(\pi,\rho)=0.
\]
Thus every coefficient vanishes, and therefore
\[
M_\rho\wedge\kappa=0
\qquad\text{whenever }\rho\nleq \kappa.
\]
If instead $\rho\le \kappa$, then for every $\pi\le \rho$ we have $\pi\wedge\kappa=\pi$, so
$
M_\rho\wedge\kappa=M_\rho.
$
Hence
\[
\boxed{
M_\rho\wedge\kappa=
\begin{cases}
M_\rho,& \rho\le \kappa,\\[4pt]
0,& \rho\nleq \kappa.
\end{cases}}
\tag{$*$}
\]
First, let $\rho\notin\mathcal K$. Since $\mathcal K$ is a down-set, there is no $\kappa\in\mathcal K$ with $\rho\le \kappa$, for otherwise $\rho\in\mathcal K$. Hence by $(*)$,
\[
M_\rho\wedge\kappa=0
\qquad\forall\,\kappa\in\mathcal K.
\]
So
\[
\mathrm{span}\{M_\rho:\rho\notin\mathcal K\}\subseteq \mathcal V(\mathcal K).
\]

Conversely, let
\[
F=\sum_{\rho} a_\rho M_\rho \in \mathcal V(\mathcal K),
\]
using part (1). Suppose $a_{\rho_0}\neq 0$ for some $\rho_0\in\mathcal K$. Choose such a $\rho_0$ maximal (with respect to refinement order) among all elements of $\mathcal K$ having nonzero coefficient. Since $\rho_0\in\mathcal K$ and $F\in\mathcal V(\mathcal K)$,
\[
0=F\wedge \rho_0=\sum_{\rho} a_\rho (M_\rho\wedge \rho_0).
\]
By $(*)$, only terms with $\rho\le \rho_0$ survive, so
\[
0=\sum_{\rho\le \rho_0} a_\rho M_\rho.
\]
Again by $(*)$, the coefficient of $M_{\rho_0}$ in this sum is exactly $a_{\rho_0}$, since no $\rho<\rho_0$ can produce $M_{\rho_0}$. Hence $a_{\rho_0}=0$, a contradiction. Therefore $a_\rho=0$ for every $\rho\in\mathcal K$, and so
\[
F\in \mathrm{span}\{M_\rho:\rho\notin\mathcal K\}.
\]

Thus
\[
\mathcal V(\mathcal K)=\mathrm{span}\{M_\rho:\rho\notin\mathcal K\}.
\qedhere
\]
\end{proof}

\section{Vanishing signals}\label{vanishing}
\begin{proposition}
    Let $f=\sum_{a=1}^q S^{(n)}_{A_a}$ be the seed LU-invariant of $q$-partite pure states. Let ${\cal N}$ be the set of partitions that do not contain a block of size $1$. Let
    \begin{align}
        M_\rho=\sum_{\pi\le \rho}\mu(\pi,\rho)\,f_\pi.
    \end{align}
    Then
    \begin{enumerate}
        \item $M_\rho$ for $\rho\in {\cal N}, \rho \neq {\bf 1}$ vanishes for all $q$.
        \item $M_{\bf 1}$ vanishes for odd $q$.
    \end{enumerate}
\end{proposition}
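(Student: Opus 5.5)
The plan is to make $M_\rho$ completely explicit as a linear combination of R\'enyi entropies of subsets of $X$, and then read off both claims. For this seed, the $\pi$-extension of the restriction is $f_\pi=\sum_{B\in\pi}S^{(n)}_B$, since the parties of ${\rm grp}_\pi(|\psi\rangle)$ are exactly the blocks of $\pi$. Exchanging the order of summation gives
\[
M_\rho=\sum_{B\subseteq X} c_B(\rho)\, S^{(n)}_B,\qquad c_B(\rho):=\sum_{\pi\le\rho,\ B\in\pi}\mu(\pi,\rho).
\]
Everything then reduces to evaluating the combinatorial coefficients $c_B(\rho)$.

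For part 1, I will use the product formula for the M\"obius function. A partition $\pi\le\rho$ is the same as a choice of a partition $\sigma_R$ of each block $R\in\rho$, and $\mu(\pi,\rho)=\prod_{R\in\rho}\mu(\sigma_R,\mathbf 1_R)$. The condition $B\in\pi$ forces $B\subseteq R_0$ for a single block $R_0$ of $\rho$; otherwise $c_B(\rho)=0$. The sum then factorizes. For each block $R\neq R_0$ we get the factor $\sum_{\sigma\in\Pi_R}\mu(\sigma,\mathbf 1_R)$, which equals $\delta_{|R|,1}$ by the defining relation of $\mu$. If $\rho\in\mathcal N$ and $\rho\neq\mathbf 1$, there is at least one such block $R$, and it has $|R|\ge 2$. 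Hence every $c_B(\rho)$ vanishes, so $M_\rho=0$ identically. This holds for all $q$, without even using purity.

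For part 2, take $\rho=\mathbf 1$. A partition $\pi\ni B$ is $B$ together with a partition $\sigma$ of $X\setminus B$, so $|\pi|=|\sigma|+1$ and $\mu(\pi,\mathbf 1)=(-1)^{|\sigma|}|\sigma|!$. Setting $k=q-|B|$, this gives
\[
c_B(\mathbf 1)=\sum_{j}(-1)^j j!\,S(k,j),
\]
where $S(k,j)$ are Stirling numbers of the second kind. The key step, and the only real obstacle, is the identity $\sum_j(-1)^j j!\,S(k,j)=(-1)^k$. I would prove it with exponential generating functions. Since $\sum_k S(k,j)x^k/k!=(e^x-1)^j/j!$, the left-hand side has EGF $\sum_j(1-e^x)^j=e^{-x}$, whose coefficients are $(-1)^k$. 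The case $k=0$, i.e.\ $B=X$, gives $1$, consistent with this. Therefore
\[
M_{\mathbf 1}=\sum_{B\subseteq X}(-1)^{q-|B|}S^{(n)}_B,
\]
reproducing the formula quoted in Section~\ref{renyi}.

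Finally I will use purity: $S^{(n)}_B=S^{(n)}_{X\setminus B}$, and $S^{(n)}_X=S^{(n)}_\emptyset=0$. Pairing each $B$ with its complement, the two terms carry signs $(-1)^{q-|B|}$ and $(-1)^{|B|}$, which differ by the factor $(-1)^q$. For odd $q$ every pair therefore cancels, and $M_{\mathbf 1}=0$.
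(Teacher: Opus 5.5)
Your proposal is correct and follows the paper's proof essentially step for step. Part 1 uses the blockwise factorization of $\mu(\pi,\rho)$ and the vanishing of $\sum_{\sigma\in\Pi_R}\mu(\sigma,\mathbf 1_R)$ for $|R|\ge 2$; part 2 uses the Stirling-number coefficient identity $\sum_j(-1)^j j!\,S(k,j)=(-1)^k$ followed by complement pairing under purity. The only differences are cosmetic: you organize part 1 by the coefficient of each $S^{(n)}_B$ rather than by the blocks of $\rho$, and you prove the Stirling identity with exponential generating functions where the paper simply cites it as standard.
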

\begin{proof}
1. Write $\rho=\{R_1,\dots,R_m\}$ with $m=|\rho|\ge 2$ and $|R_i|\ge 2$ for all $i$.
Any refinement $\pi\le \rho$ is uniquely determined by partitions $\pi_i\in\Pi_{R_i}$ via
$\pi=\pi_1\sqcup\cdots\sqcup \pi_m$. For partition lattices, the interval factorization implies the M\"obius
factorization
\[
\mu(\pi,\rho)=\prod_{i=1}^m \mu(\pi_i,{\bf 1}_{R_i}),
\]
where ${\bf 1}_{R_i}$ is the one-block partition of $R_i$.
Moreover, by definition of $f_\pi$ we have the blockwise decomposition
\[
f_\pi = \sum_{i=1}^m \sum_{C\in\pi_i} S^{(n)}_C \;=:\; \sum_{i=1}^m f^{(R_i)}_{\pi_i}.
\]
Therefore
\begin{align*}
M_\rho
&=\sum_{\pi\le\rho}\mu(\pi,\rho)\,f_\pi
 =\sum_{\pi_1,\dots,\pi_m}\Big(\prod_{i=1}^m \mu(\pi_i,{\bf 1}_{R_i})\Big)\Big(\sum_{j=1}^m f^{(R_j)}_{\pi_j}\Big)\\
&=\sum_{j=1}^m\left(\sum_{\pi_j}\mu(\pi_j,{\bf 1}_{R_j})\,f^{(R_j)}_{\pi_j}\right)
\prod_{i\neq j}\left(\sum_{\pi_i}\mu(\pi_i,{\bf 1}_{R_i})\right).
\end{align*}
For each $i$ with $|R_i|\ge 2$, the M\"obius identity on the nontrivial interval $[{\bf 0}_{R_i},{\bf 1}_{R_i}]$ gives
\[
\sum_{\pi_i\in\Pi_{R_i}}\mu(\pi_i,{\bf 1}_{R_i})=0.
\]
Since $m\ge 2$, for every $j$ the product over $i\neq j$ contains at least one such zero factor, hence $M_\rho=0$.
This also shows why $M_{\bf 1}$ may be non-zero. In that case $m=1$.

2. For the top element ${\bf 1}$ one has $\mu(\pi,{\bf 1})=(-1)^{|\pi|-1}(|\pi|-1)!$.
Fix a nonempty proper subset $A\subsetneq X$ and collect the coefficient of $S^{(n)}_A$ in $M_{\bf 1}$.
A partition $\pi$ contributes to $S^{(n)}_A$ precisely when $A$ is a block of $\pi$, i.e.\ when $\pi=\{A\}\sqcup \pi'$
for some partition $\pi'$ of $X\setminus A$. If $|X\setminus A|=m$ and $\pi'$ has $r$ blocks, then $|\pi|=r+1$ and
$\mu(\pi,{\bf 1})=(-1)^r r!$. Summing over all $\pi'$ yields the coefficient
\[
\sum_{r=0}^m (-1)^r r!\,S(m,r),
\]
where $S(m,r)$ are Stirling numbers of the second kind. Using the standard identity
$\sum_{r=0}^m (-1)^r r!\,S(m,r)=(-1)^m$, we obtain the expansion:
\[
M_{\bf 1}(|\psi\rangle)=\sum_{\varnothing\neq A\subsetneq X} (-1)^{|X\setminus A|}\,S^{(n)}_A
=\sum_{\varnothing\neq A\subsetneq X} (-1)^{q-|A|}\,\,S^{(n)}_A.
\]

If $|\psi\rangle$ is pure on $X$, then $S^{(n)}_A=S^{(n)}_{A^c}$ for all $A\subseteq X$.
Pairing the terms $A$ and $A^c$, the combined coefficient is
\[
(-1)^{q-|A|}+(-1)^{q-|A^c|}
= (-1)^{|A|}\bigl((-1)^q+1\bigr),
\]
which vanishes when $q$ is odd. Hence $M_{\bf 1}(|\psi\rangle)=0$ for odd $q$ in the pure-state case. For even $q$, using $S_{\varnothing}=S_{X}=0$, we get the formula,
\begin{align}
    M_{\bf 1}= \sum_{A\subset X} (-1)^{q-|A|}\,\,S^{(n)}_A.
\end{align}
\end{proof}

\bibliographystyle{apsrev4-2}
\bibliography{references.bib}

\end{document}